\newcommand\version{September 26, 2007}
\font\notefont=cmsl8  \pagestyle{myheadings}
\newtheorem{theorem}{Theorem}[section]
\newtheorem{proposition}[theorem]{Proposition}
\newtheorem{lemma}[theorem]{Lemma}
\newtheorem{corollary}[theorem]{Corollary}
\theoremstyle{definition}
\theoremstyle{remark}
\newtheorem{remark}[theorem]{Remark}
\numberwithin{equation}{section}
\newcommand{\cl}{\mathrm{cl}}
\renewcommand{\epsilon}{\varepsilon}
\newcommand{\N}{\mathbb{N}}
\renewcommand{\phi}{\varphi}
\newcommand{\R}{\mathbb{R}}
\newcommand{\Z}{\mathbb{Z}}
\DeclareMathOperator{\curl}{curl}
\DeclareMathOperator{\tr}{tr}
\title{P\'olya's conjecture in the presence of a constant magnetic field}
\author{Rupert L. Frank}
\address{Rupert L. Frank, Department of Mathematics, Fine Hall,
  Princeton University, Princeton, NJ 08544, USA}
\email{rlfrank@math.princeton.edu}
\author{Michael Loss}
\address{Michael Loss, School of Mathematics, Georgia Institute of
  Technology, Atlanta, GA 30332-0160, USA}
\email{loss@math.gatech.edu}
\author{Timo Weidl}
\address{Timo Weidl, Department of Mathematics and Physics, Stuttgart
  University, Pfaffen\-wald\-ring 57, 70569 Stuttgart, Germany}
\email{weidl@mathematik.uni-stuttgart.de}
\begin{document}

\begin{abstract}
We consider the Dirichlet Laplacian with a constant magnetic field in
a two-dimensional domain of finite measure. We determine the sharp
constants in semi-classical eigenvalue estimates and show, in
particular, that P\'olya's conjecture is not true in the presence of a
magnetic field.
\end{abstract}

\thanks{\copyright\, 2007 by the authors. This paper may be reproduced, in its
entirety, for non-commercial purposes.}

\maketitle


\section{Introduction}

Let $\Omega\subset\R^2$ be a domain of finite measure and define the
Dirichlet Laplacian $H^\Omega$ in $L_2(\Omega)$ as the Friedrichs
extension of $-\Delta$ initially given on $C_0^\infty(\Omega)$. This
defines a self-adjoint non-negative operator, and by Rellich's
compactness theorem its spectrum is discrete and accumulates at
infinity only.
The spectrum of $H^\Omega$ plays an important role
in many physical models (such as membrane vibration or quantum
mechanics) and its determination is a classical problem in
mathematical physics.

Let $(\lambda_n)$ be the non-decreasing sequence of eigenvalues of
$H^\Omega=-\Delta$ (taking multiplicities into account) and let
$N(\lambda,H^\Omega):=\#\{n:\ \lambda_n<\lambda\}$ denote their counting
function. 

In 1911 H. Weyl \cite{W} (see also \cite[Ch. XIII]{RS}) showed the
asymptotic formula
\begin{equation*} 
\lambda_n=\frac{4\pi n}{|\Omega|}(1+o(1)),\quad n\to\infty.
\end{equation*}
In terms of the counting function this is equivalent to
\begin{equation}\label{eq:weyl}
  N(\lambda,H^\Omega) 
  = \frac1{4\pi}\lambda |\Omega|(1+o(1)),\quad \lambda\to+\infty\,.
\end{equation}
Integrating the latter formula one finds as well the asymptotic
behavior of the eigenvalue means
\begin{equation}\label{eq:bly}
	\tr(H^\Omega-\lambda)_-^\gamma :=\sum_{n:\lambda>\lambda_n}
        (\lambda-\lambda_n)^\gamma
        = L_{\gamma,2}^{\cl} \lambda^{\gamma+1} |\Omega| (1+o(1)),
	\quad \lambda\to+\infty,
\end{equation}
where $\gamma\geq 0$ and
\begin{equation}\label{eq:classconst2}
	L_{\gamma,2}^{\cl} := \left(4\pi(\gamma+1)\right)^{-1}.
\end{equation}
Note that the expression on the right hand side of \eqref{eq:bly}
equals the classical phase space average
\begin{equation}\label{eq:phspv}
 L_{\gamma,2}^{\cl} \lambda^{\gamma+1} |\Omega|=
(2\pi)^{-2} \iint_{\Omega\times\R^2} (|\xi|^2-\lambda)_-^\gamma \,dxd\xi
\end{equation}
of the symbol $|\xi|^2$  of the Laplacian.

P\'olya \cite{P} found in 1961  that for 
{\em tiling domains}\footnote{A domain $\Omega\subset \R^2$ is tiling if
one can cover $\R^2$ up to a set of measure zero by pairwise disjoint
congruent copies of $\Omega$.} $\Omega$ the asymptotic expression
\eqref{eq:weyl} is in fact an upper bound on the counting function, namely
\begin{equation}\label{eq:polya}
	N(\lambda,H^\Omega) \leq \frac1{4\pi}\lambda |\Omega|,\quad
        \lambda\geq 0.
\end{equation}
By \eqref{eq:weyl} the constant in this bound is optimal. Moreover,
P\'olya conjectured that this bound should hold true for {\em
  arbitrary} domains $\Omega$ with the same sharp constant
$\frac1{4\pi}$.

The fact that the counting function $N(\lambda,H^\Omega)$ can be estimated by
\begin{equation}\label{eq:liebest}
	N(\lambda,H^\Omega) \leq C\lambda |\Omega|,\quad
        \lambda\geq 0.
\end{equation}
with some constant $C$ which does not depend on $\lambda$ or the shape
of the domain is due to Rozenblum \cite{R1}, Lieb \cite{Li2} and
Metivier \cite{M}. Results with sharp constants for \emph{sums} of
eigenvalues have been obtained by Berezin and by Li and Yau. Indeed,
Berezin \cite{B} proved that
\begin{equation}\label{eq:br}
	\tr(H^\Omega-\lambda)_-^\gamma \leq L_{\gamma,2}^{\cl} \lambda^{\gamma+1} |\Omega|
	\quad\mbox{for}\quad \gamma\geq 1\,.
\end{equation}
In view of the Weyl asymptotics \eqref{eq:bly} the constant in this
bound is optimal. This estimate in the case $\gamma=1$ implies after
taking the Legendre transform the celebrated result by Li and Yau \cite{LY}
\begin{equation}\label{eq:ly}
\sum_{j=1}^n \lambda_j \geq \frac{2\pi n^2}{|\Omega|},
\quad n\in \mathbb N\,.  
\end{equation}
Both \eqref{eq:br} and \eqref{eq:ly} give rise to the best known upper
bound $C\leq (2\pi)^{-1}$ on the sharp constant $C$ in
\eqref{eq:liebest}. However, P\'olya's conjecture, namely that
\eqref{eq:polya} holds for general domains, remains open. In fact,
this question is unresolved even in the case where the domain is a disk.

The main goal of this paper is to {\em disprove} the analogous conjecture for
the Dirichlet Laplacian with a constant magnetic field.

Put $D=-i\nabla$ and let $A$ be a sufficiently regular real vector field on
$\Omega$. We consider the operator $(D-A)^2$ on $L_2(\Omega)$ with
Dirichlet boundary conditions defined in the quadratic form sense. If
$|\Omega|$ has finite measure, the spectrum of $(D-A)^2$ is discrete
and as above, we can introduce the ordered sequence of eigenvalues and
the corresponding counting function. It is well-known that the
asymptotic formulae \eqref{eq:weyl} and \eqref{eq:bly} remain true in
the magnetic case as well. This is in accordance with the fact that
the magnetic field leaves the classical phase space average unchanged,
\begin{equation*}
	(2\pi)^{-2} \iint_{\Omega\times\R^2} (|\xi-A(x)|^2-\lambda)_-^\gamma \,dxd\xi
	= (2\pi)^{-2} \iint_{\Omega\times\R^2} (|\xi|^2-\lambda)_-^\gamma \,dxd\xi\,.
\end{equation*}
Therefore, it seems reasonable to discuss P\'olya-type bounds in the magnetic
case as well. In fact, it turns out that the bound \eqref{eq:liebest} extends
to the magnetic case with a suitable constant $C$ which does not depend
on $A$, $\Omega$ and $\lambda$, see e.g. \cite{R2}. 

There are also results concering magnetic estimates with \emph{sharp}
semi-classical constants. As recalled in the appendix, a result by
Laptev and Weidl \cite{LW} implies the bound
\begin{equation}\label{eq:blymag}
	\tr((D-A)^2-\lambda)_-^\gamma \leq L_{\gamma,2}^{\cl} \lambda^{\gamma+1} |\Omega|
\end{equation}
for \emph{arbitrary} $A$ and all $\gamma\geq 3/2$. 
In \cite{ELV} this result was extended to $\gamma\geq 1$ in the special case
of a homogeneous magnetic field, $A(x) = \frac B2 (-x_2,x_1)^T$.  The
latter two results motivate the question, whether P\'olya's conjecture
could be true in the magnetic case.

In this note we shall show that this intuition is wrong and that the
P\'olya estimate \eqref{eq:polya} in the magnetic case can be violated
even for tiling domains. More precisely, we consider a homogeneous
magnetic field, $A(x) = \frac B2 (-x_2,x_1)^T$, and show that
for arbitrary domains $\Omega$ of finite measure the bound
\begin{equation}\label{eq:main}
	N(\lambda,(D-A)^2) \leq \frac1{2\pi}\lambda |\Omega| = 2 L_{0,2}^{\cl} \lambda |\Omega|
\end{equation}
holds true. We prove that the constant in this bound is optimal and that
the numerical factor~$2$ on the right hand side cannot be improved -
not even in the tiling case. A similar phenomenon occurs for
eigenvalue moments of order $\gamma\in(0,1)$.

As a consequence of our result we see, in particular, that any attempt
to prove P\'olya's conjecture with a method which extends to constant
magnetic fields must fail.



\section{Main results}\label{sec:main}

Let $\Omega\subset\R^2$ be a domain of finite measure. For $B>0$ we consider the self-adjoint operator 
\begin{equation*}
	H_B^\Omega := (D-BA)^2
	\qquad\text{in } L_2(\Omega)
\end{equation*}
with Dirichlet boundary conditions, i.e., closing the form $\|(D-BA)u\|^2$ on $C_0^\infty(\Omega)$. The magnetic vector potential $A$ is always chosen in the form
\begin{equation*}
	A(x) := \frac12 (-x_2,x_1)^T,
\end{equation*}
and we remark that $\curl BA\equiv B$.
In other words, we restrict the vector potential for 
a constant magnetic field from $\R^2$ to $\Omega$.
\footnote{
For simply connected domains 
$\Omega$ this choice of $A$ is up to gauge invariance
unique in the class of all vector potentials inducing a constant magnetic
field in $\Omega$. If $\Omega$ is not simply connected, then one has 
gauge invariant classes
of magnetic vector potentials inducing a constant magnetic field {\rm inside}
$\Omega$, but which are not restrictions of a vector potential producing
a constant magnetic field on the whole of $\R^2$. In this paper we do not
consider such vector potentials.}

The operator $H_B^\Omega$ has compact resolvent and we denote by $N(\lambda, H_B^\Omega)$ the number of its eigenvalues less than $\lambda$, counting multiplicities. Our first main result is

\begin{theorem}\label{main1}
	Let $\Omega\subset\R^2$ be a domain of finite measure. Then for all $B>0$ and $\lambda>0$
	\begin{equation}\label{eq:main1number}
		N(\lambda, H_B^\Omega) \leq R_0 L_{0,2}^{\cl} |\Omega| \lambda
	\end{equation}
	and
	\begin{equation}\label{eq:main1moments}
		\tr(H_B^\Omega-\lambda)_-^\gamma \leq R_\gamma L_{\gamma,2}^{\cl} |\Omega| \lambda^{\gamma+1},
		\qquad 0<\gamma<1,
	\end{equation}
	where $R_0=2$ and $R_\gamma=2\left(\gamma/(\gamma+1)\right)^\gamma$ for $0<\gamma<1$. One has $R_\gamma>1$ and these constants can not be improved, not even if $\Omega$ is tiling. More precisely, for any $0\leq\gamma<1$, $\epsilon>0$ and $B>0$ there exists a square $\Omega$ and $\lambda>0$ such that
	\begin{equation}\label{eq:main1counter}
		\tr(H_B^\Omega-\lambda)_-^\gamma \geq (1-\epsilon) R_\gamma L_{\gamma,2}^{\cl} |\Omega| \lambda^{\gamma+1}.
	\end{equation}
\end{theorem}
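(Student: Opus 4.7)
The plan is to derive the upper bound from a magnetic Berezin--Li--Yau inequality
\begin{equation*}
\tr(H_B^\Omega-\lambda)_-^\gamma \;\leq\; \frac{B|\Omega|}{2\pi}\sum_{k=0}^{\infty}\bigl(\lambda-(2k+1)B\bigr)_+^\gamma,
\end{equation*}
valid for all $\gamma\in[0,1]$, and then to saturate it asymptotically on a large square. The inequality is the magnetic analog of $\tr f(H_B^\Omega)\leq\tr\bigl(\chi_\Omega f(H_B^{\R^2})\chi_\Omega\bigr)$ for $f(t)=(\lambda-t)_+^\gamma$, and I expect to prove it by extending orthonormal eigenfunctions $\psi_n$ of $H_B^\Omega$ to $\tilde\psi_n\in L^2(\R^2)$ by zero, forming $\tilde\Pi=\sum_n|\tilde\psi_n\rangle\langle\tilde\psi_n|\leq\chi_\Omega$, and using the Landau decomposition $H_B^{\R^2}=\sum_k(2k+1)BP_k$ together with $\tr(\chi_\Omega P_k)=B|\Omega|/(2\pi)$. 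For $\gamma\geq 1$ this works via pointwise Jensen for the convex map $t\mapsto(\lambda-t)_+^\gamma$ (the Erd{\H o}s--Loss--Vougalter route, giving each eigenfunction the bound $(\lambda-\lambda_n)_+^\gamma\leq\sum_k(\lambda-(2k+1)B)_+^\gamma\|P_k\tilde\psi_n\|^2$, which sums to the claim). The cases $\gamma=0$ and $\gamma\in(0,1)$ require more care; one route is to establish the counting-function version $N(\lambda,H_B^\Omega)\leq BK(\lambda)|\Omega|/(2\pi)$, where $K(\lambda)=\#\{k:(2k+1)B<\lambda\}$, either directly by a coherent-state computation or by a refinement of the variational argument, and then recover the $\gamma$-bound by the Aizenman--Lieb identity $\tr(H-\lambda)_-^\gamma=\gamma\int_0^\lambda s^{\gamma-1}N(\lambda-s,H)\,ds$.

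Granted the magnetic Berezin inequality, the sharp constant $R_\gamma$ emerges by optimization. Set $\mu:=\lambda/B$ and let $K=K(\lambda)$ so that $\mu\in(2K-1,2K+1]$; setting $\theta:=(\mu-(2K-1))/2\in(0,1]$ and reindexing $j=K-1-k$,
\begin{equation*}
\sum_k\bigl(\lambda-(2k+1)B\bigr)_+^\gamma = (2B)^\gamma\sum_{j=0}^{K-1}(\theta+j)^\gamma.
\end{equation*}
Jensen's inequality applied to the concave map $t\mapsto t^\gamma$ (for $\gamma\in[0,1)$) gives $\sum_{j=0}^{K-1}(\theta+j)^\gamma\leq K(\theta+(K-1)/2)^\gamma=K((\mu-K)/2)^\gamma$, hence
\begin{equation*}
\frac{\tr(H_B^\Omega-\lambda)_-^\gamma}{|\Omega|\lambda^{\gamma+1}}\;\leq\; \frac{u(1-u)^\gamma}{2\pi},\qquad u:=K/\mu.
\end{equation*}
Elementary calculus gives $\max_{u\in(0,1)}u(1-u)^\gamma=(1/(\gamma+1))(\gamma/(\gamma+1))^\gamma$, attained at $u_*=1/(\gamma+1)$. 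Rearranging produces $\tr\leq R_\gamma L_{\gamma,2}^{\cl}|\Omega|\lambda^{\gamma+1}$ with exactly $R_\gamma=2(\gamma/(\gamma+1))^\gamma$ (the value $R_0=2$ being the continuous limit at $\gamma=0$). The optimizer $K=1,\,\lambda=(1+\gamma)B$ explains why the worst case sits just above the lowest Landau level.

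For the sharpness statement \eqref{eq:main1counter} I take $\Omega=(-L/2,L/2)^2$ and $\lambda=(1+\gamma)B+\delta$ with $\delta>0$ small (or $\lambda$ slightly above $B$ when $\gamma=0$), and invoke the Weyl-type asymptotic
\begin{equation*}
\lim_{L\to\infty}\frac{\tr(H_B^\Omega-\lambda)_-^\gamma}{L^2} \;=\; \frac{B}{2\pi}\sum_{k=0}^{\infty}\bigl(\lambda-(2k+1)B\bigr)_+^\gamma.
\end{equation*}
The matching lower bound (which is what is really needed here) is produced variationally using approximately $BL^2/(2\pi)$ trial states: smoothly cut-off magnetic translates of the lowest Landau ground state, placed on a lattice of density $B/(2\pi)$ well inside $\Omega$; these are nearly orthonormal and have Rayleigh quotients close to $B$, so the min-max principle forces $H_B^\Omega$ to have $\approx BL^2/(2\pi)$ eigenvalues accumulating near $B$. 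At $\lambda=(1+\gamma)B$ the right-hand side of the asymptotic collapses to $\gamma^\gamma B^{\gamma+1}/(2\pi)=R_\gamma L_{\gamma,2}^{\cl}\lambda^{\gamma+1}$, matching the upper bound, so for any prescribed $\varepsilon>0$ a sufficiently large $L$ (and sufficiently small $\delta$) produces \eqref{eq:main1counter}.

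The main obstacle is the magnetic Berezin--Li--Yau inequality for $\gamma\in[0,1)$. The pointwise Jensen argument for $\gamma\geq 1$ uses convexity of $t\mapsto(\lambda-t)_+^\gamma$, which fails for $\gamma<1$, and a direct variational proof of the counting bound $N(\lambda,H_B^\Omega)\leq BK(\lambda)|\Omega|/(2\pi)$ does not follow easily from $\tilde\Pi\leq\chi_\Omega$ alone: that manipulation only yields $\|P_\lambda\tilde\psi_n\|^2\geq((2K+1)B-\lambda)/((2K+1)B)<1$, delivering the strictly weaker bound $N\leq\lambda|\Omega|/(2\pi)$ after integrating. The actual proof of the Landau-structured counting inequality, or equivalently of magnetic Berezin for $0\leq\gamma<1$, is thus the real technical work; everything else -- the Jensen optimization of the second paragraph and the Weyl asymptotic on large squares -- is routine once that lemma is in hand.
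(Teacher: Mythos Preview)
Your plan hinges on the magnetic Berezin inequality
\[
\tr(H_B^\Omega-\lambda)_-^\gamma \leq \mathfrak B_\gamma(B,\lambda)\,|\Omega|
= \frac{B|\Omega|}{2\pi}\sum_{k\geq 0}\bigl(\lambda-(2k+1)B\bigr)_+^\gamma
\]
for $0\leq\gamma<1$ and \emph{arbitrary} $\Omega$, and you correctly identify this as the real obstacle. The paper does not prove this and in fact leaves it explicitly open (it is established only for tiling $\Omega$, via P\'olya's averaging argument combined with the density-of-states computation). So as written your proposal has a gap at exactly the point you flag.

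The paper circumvents this obstacle entirely. Instead of trying to prove the Landau-structured bound for $\gamma<1$, it uses only the $\gamma=1$ case
\[
\tr(H_B^\Omega-\mu)_- \leq L_{1,2}^{\cl}\,|\Omega|\,\mu^2,
\]
which \emph{does} follow from the Berezin--Lieb argument you sketch (since $t\mapsto(\mu-t)_+$ is convex), and then descends to $\gamma<1$ via the elementary pointwise inequality
\[
(E-\lambda)_-^\gamma \leq C(\gamma,1)\,(\mu-\lambda)^{\gamma-1}(E-\mu)_-,\qquad \mu>\lambda,
\]
with $C(0,1)=1$ and $C(\gamma,1)=\gamma^\gamma(1-\gamma)^{1-\gamma}$. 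Summing over eigenvalues and optimizing in $\mu$ yields precisely $R_\gamma L_{\gamma,2}^{\cl}|\Omega|\lambda^{\gamma+1}$ (the optimal $\mu$ is $\lambda(\gamma+1)/\gamma$, or $\mu\to\infty$ when $\gamma=0$). So the ``real technical work'' you anticipate is simply not needed: the non-magnetic-symbol bound at $\gamma=1$ already carries enough information to produce the sharp excess factor $R_\gamma$. Your Jensen optimization in the second paragraph is a correct computation of $\sup_\lambda \mathfrak B_\gamma(B,\lambda)/(L_{\gamma,2}^{\cl}\lambda^{\gamma+1})$, but it is applied to the wrong starting inequality.

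For sharpness your idea is essentially the paper's: one shows $L^{-2}\tr(H_B^{Q_L}-\lambda)_-^\gamma\to\mathfrak B_\gamma(B,\lambda)$ on large squares and evaluates at $\lambda=(\gamma+1)B$ (or $\lambda\downarrow B$ for $\gamma=0$), where $\mathfrak B_\gamma(B,\lambda)=R_\gamma L_{\gamma,2}^{\cl}\lambda^{\gamma+1}$. The paper gets the required lower bound on $N(\lambda,H_B^{Q_L})$ not by hand-built trial states but by a Dirichlet--Neumann comparison (Nakamura) combined with the exactly solvable torus problem; your coherent-state sketch would also work but needs an almost-orthogonality argument to be made rigorous.
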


We emphasize that for linear and superlinear moments one has the semi-classical bound
\begin{equation}\label{eq:elv}
	\tr(H_B^\Omega-\lambda)_-^\gamma \leq L_{\gamma,2}^{\cl} |\Omega| \lambda^{\gamma+1},
	\qquad \gamma\geq 1,
\end{equation}
\emph{without} an excess factor. The inequality \eqref{eq:elv} is essentially contained in \cite{ELV} but will be rederived in Corollary \ref{elv} below.

Our second main result concerns tiling domains. We shall show that in this case the inequalities \eqref{eq:main1number} and \eqref{eq:main1moments} can be strengthened if one is willing to allow the right hand side depend on $B$. Let us define
\begin{equation}\label{eq:magweyl}
\mathfrak B_\gamma(B,\lambda) 
:= (2\pi)^{-1} B \sum_{k\in\N_0}
\left(\lambda-B(2k+1)\right)_+^{\gamma}.
\end{equation}
For $\gamma=0$ this is defined to be left-continuous in $\lambda$, i.e., $0^0_-:=0$.

\begin{theorem}\label{main2}
	Let $\Omega\subset\R^2$ be a tiling domain of finite measure. Then for all $B>0$ and $\lambda>0$
	\begin{equation}\label{eq:main2number}
		N(\lambda, H_B^\Omega) \leq \mathfrak B_0(B,\lambda) |\Omega|
	\end{equation}
	and
	\begin{equation}\label{eq:main2moments}
		\tr(H_B^\Omega-\lambda)_-^\gamma \leq \mathfrak B_\gamma(B,\lambda) |\Omega|,
		\qquad 0<\gamma<1,
	\end{equation}
	and these estimates can not be improved. More precisely, for any $0\leq\gamma<1$, $\epsilon>0$, $B>0$, $\lambda>0$ there exists a square $\Omega$ such that
	\begin{equation}\label{eq:main2counter}
		\tr(H_B^\Omega-\lambda)_-^\gamma \geq (1-\epsilon) \mathfrak B_\gamma(B,\lambda) |\Omega|.
	\end{equation}
\end{theorem}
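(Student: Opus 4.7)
The plan is a Dirichlet-Neumann bracketing combined with the magnetic Weyl asymptotics on a large square.

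\emph{Upper bound.} Since $\Omega$ is tiling, decompose $\R^2=\bigcup_{j\in\N}\overline{\Omega_j}$ into pairwise (up to measure zero) disjoint congruent copies of $\Omega$. Fix an open square $\Lambda_L$ of side $L$ and set $I_L:=\{j:\overline{\Omega_j}\subset\Lambda_L\}$; then $|I_L|\,|\Omega|/|\Lambda_L|\to 1$ as $L\to\infty$. Since the magnetic field is constant, any two choices of symmetric-gauge potentials on congruent domains differ by a gauge, and hence $\spec(H_B^{\Omega_j})=\spec(H_B^\Omega)$ with multiplicities for every $j$. The form-domain inclusion $\bigoplus_{j\in I_L}H_0^1(\Omega_j)\subset H_0^1(\Lambda_L)$, obtained by extension by zero, together with the variational principle yields
\[
|I_L|\,\tr(H_B^\Omega-\lambda)_-^\gamma
=\sum_{j\in I_L}\tr(H_B^{\Omega_j}-\lambda)_-^\gamma
\leq \tr(H_B^{\Lambda_L}-\lambda)_-^\gamma
\]
for every $\gamma\geq 0$. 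Dividing by $|\Lambda_L|$ and letting $L\to\infty$ reduces both \eqref{eq:main2number} and \eqref{eq:main2moments} to the magnetic Weyl law
\begin{equation}\label{eq:wsquare}
\lim_{L\to\infty}|\Lambda_L|^{-1}\tr(H_B^{\Lambda_L}-\lambda)_-^\gamma=\mathfrak B_\gamma(B,\lambda),\qquad \gamma\in[0,1).
\end{equation}

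\emph{Weyl law on squares.} Establishing \eqref{eq:wsquare} is the main obstacle. The strategy is to bracket $H_B^{\Lambda_L}$ between magnetic Dirichlet and Neumann direct sums on a sub-partition of $\Lambda_L$ into cells of intermediate side $\ell$ with $B^{-1/2}\ll\ell\ll L$; on each cell the counting function and Riesz means approach those of the bulk Landau Hamiltonian $H_B^{\R^2}$, whose spectrum is $\{B(2k+1)\}_{k\in\N_0}$ with density $(2\pi)^{-1}B$ per level per unit area, so they equal $\mathfrak B_\gamma(B,\lambda)\ell^2(1+o(1))$ as $\ell\to\infty$. A cleaner equivalent route is Floquet theory on a magnetic torus whose side satisfies the flux-quantization $B\ell^2/(2\pi)\in\N$; there the magnetic Laplacian has exactly the Landau levels as eigenvalues with multiplicity $B\ell^2/(2\pi)$. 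Either way one lets first $L\to\infty$ and then $\ell\to\infty$ so that the boundary contributions vanish. For $0<\gamma<1$ continuity in $\lambda$ makes the limit automatic; for $\gamma=0$ the left-continuous convention $0_-^0:=0$ in \eqref{eq:magweyl} reconciles \eqref{eq:wsquare} with the point eigenvalues of $H_B^{\R^2}$ at the Landau levels.

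\emph{Sharpness.} To prove \eqref{eq:main2counter} take $\Omega=\Lambda_L$ itself, which is a tiling square. Applying \eqref{eq:wsquare} directly to $\Omega$ and choosing $L=L(\epsilon,B,\lambda,\gamma)$ sufficiently large gives $\tr(H_B^\Omega-\lambda)_-^\gamma\geq(1-\epsilon)\mathfrak B_\gamma(B,\lambda)|\Omega|$, proving that the constants in \eqref{eq:main2number} and \eqref{eq:main2moments} are sharp even within the class of squares.
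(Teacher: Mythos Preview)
Your overall structure matches the paper's: the P\'olya-type tiling argument reduces the upper bounds \eqref{eq:main2number}, \eqref{eq:main2moments} to the density of states on large squares, and sharpness is obtained by taking $\Omega$ to be a large square. One minor variation is that you handle the Riesz means $0<\gamma<1$ directly via the variational principle, whereas the paper derives them from the counting-function case through the Aizenman--Lieb representation
\[
\tr(H-\lambda)_-^\gamma=\gamma\int_0^\infty N(\lambda-\mu,H)\,\mu^{\gamma-1}\,d\mu\,;
\]
both routes work.

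The genuine gap is in your sketch of the magnetic Weyl law on squares. The upper bound $\limsup_{L\to\infty} L^{-2}N(\lambda,H_B^{\Lambda_L})\leq\mathfrak B_0(B,\lambda)$ follows, as you say, from the form inclusion $H^1_0\subset H^1_{per}$ together with the explicit torus spectrum under flux quantization (this is exactly the paper's Proposition~\ref{torus}). But your two-scale Dirichlet--Neumann bracketing on sub-cells of side $\ell$ does \emph{not} produce the matching lower bound: it yields only $L^{-2}N(\lambda,H_B^{\Lambda_L})\geq\ell^{-2}N(\lambda,H_B^{\text{Dir.\ cell}})$, which is the same problem at scale $\ell$, while the torus comparison bounds the cell Dirichlet count from \emph{above} and the cell Neumann count from \emph{below}---both in the wrong direction to close the sandwich. ``Boundary contributions vanish'' is precisely the statement that needs proof. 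The paper supplies it via a nontrivial input of Nakamura (Proposition~\ref{nakamura}): the resolvent difference $(H_B^D(L)+I)^{-m}-(H_B^N(L)+I)^{-m}$ has trace norm $O(L)$, of boundary order; combined with the Ky-Fan inequality this forces the Dirichlet, Neumann and periodic counting functions to share the same leading $L^2$-term. Without such an estimate the $\liminf$---and hence the sharpness assertion \eqref{eq:main2counter}---remains unproved.
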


We emphasize that for $\gamma\geq 1$ one has the bound
\begin{equation}\label{eq:elv2}
	\tr(H_B^\Omega-\lambda)_-^\gamma \leq \mathfrak B_\gamma(B,\lambda) |\Omega|,
	\qquad \gamma\geq 1,
\end{equation}
for an \emph{arbitrary} domain $\Omega\subset\R^2$ of finite measure. This is again essentially contained in \cite{ELV}. We give an independent proof in Theorem \ref{elv2} below and show also that \eqref{eq:elv2} is stronger than \eqref{eq:elv}. The question whether \eqref{eq:main2number} and \eqref{eq:main2moments} extend to not necessarily tiling domains is left open.

\begin{remark}
	There are estimates intermediate between \eqref{eq:main1number} and \eqref{eq:main2number} with the right hand side depending on $B$ but in a simpler way than in \eqref{eq:main2number}. For example, we mention the estimate
	\begin{equation}\label{eq:main1number0}
		N(\lambda,H_B^\Omega) \leq \frac1{4\pi}(\lambda+B) |\Omega|
	\end{equation}
	for $\Omega$ tiling. Note that this estimate is \emph{stronger} than \eqref{eq:main1number} since $N(\lambda,H_B^\Omega)=0$ for $\lambda\leq B$. In particular, it coincides with the estimate \eqref{eq:polya} for $B=0$.
\end{remark}

\begin{remark}
	There is an essentially equivalent way of stating the estimates \eqref{eq:main1number} and \eqref{eq:main1number0}. Namely denoting the eigenvalues of $H_B^\Omega$ by $\lambda_{B,j}^\Omega$ and passing to the limit $\lambda\to \lambda_{B,j}^\Omega+$ in these estimates we find
	\begin{equation*}
		\lambda_{B,N}^\Omega \geq 2\pi |\Omega|^{-1} N
	\end{equation*}
	and, respectively,
	\begin{equation*}
		\lambda_{B,N}^\Omega \geq 4\pi |\Omega|^{-1} N - B.
	\end{equation*}
\end{remark}

\begin{remark}
	For the lower bound \eqref{eq:main1counter} we fix $B>0$ and choose $\Omega$ and $\lambda$. Alternatively, one can fix a cube $\Omega$ and choose $B$ and $\lambda$. This follows by a simple scaling argument.
\end{remark}
	

\section{The magnetic density of states}

\subsection{The magnetic density of states}

In this section we shall use a slightly modified notation. When $\Omega=(-L/2,L/2)^2$ we shall denote the operator $H_B^\Omega$ by $H_B^D(L)$. Recall that $\mathfrak B_0(B,\lambda)$ was defined in \eqref{eq:magweyl}. Our goal is to prove

\begin{proposition}\label{dos}
Let $B>0$ and $\lambda>0$. Then
\begin{equation}\label{eq:dos}
	\lim_{L\to\infty} L^{-2} N(\lambda,H_B^D(L)) = \mathfrak B_0(B,\lambda).
\end{equation}
\end{proposition}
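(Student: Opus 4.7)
The value $\mathfrak{B}_0(B,\lambda)$ is precisely the integrated density of states of the full-plane Landau Hamiltonian $H_B := (D-BA)^2$ on $L_2(\R^2)$. Its spectrum consists of the Landau levels $B(2k+1)$, $k\in\N_0$, and the orthogonal projection $P_k$ onto the $k$-th Landau level has integral kernel with constant diagonal $P_k(x,x)\equiv B/(2\pi)$. Consequently, for every bounded measurable $\Omega\subset\R^2$ and every non-Landau $\lambda$,
\begin{equation*}
 \tr\bigl(\chi_\Omega\, E_{(-\infty,\lambda)}(H_B)\, \chi_\Omega\bigr) = \mathfrak{B}_0(B,\lambda)\, |\Omega|.
\end{equation*}
My plan is a Dirichlet--Neumann bracketing argument that reduces the finite-box counting function to this exact trace, followed by evaluation of the two bracketing limits.

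\textbf{Bracketing.} Let $\ell$ divide $L$ and partition $Q_L := (-L/2,L/2)^2$ into $(L/\ell)^2$ translated sub-squares $Q$ of side $\ell$. The gauge transformation $u\mapsto e^{i\phi_a}u$ with $\phi_a(x) := \tfrac{B}{2}(a_2 x_1 - a_1 x_2)$ recenters a sub-square translated by $a$ at the origin, so the Dirichlet and Neumann counts on each $Q$ depend only on $\ell$. Writing $H_B^N(L)$ for the Neumann analogue of $H_B^D(L)$, the form-domain inclusions $\bigoplus_Q H_0^1(Q)\subset H_0^1(Q_L)$ and $H^1(Q_L)\subset \bigoplus_Q H^1(Q)$, combined with the operator inequality $H_B^N(L)\leq H_B^D(L)$, yield via the min--max principle the sandwich
\begin{equation*}
 \ell^{-2} N(\lambda, H_B^D(\ell)) \leq L^{-2} N(\lambda, H_B^D(L)) \leq L^{-2} N(\lambda, H_B^N(L)) \leq \ell^{-2} N(\lambda, H_B^N(\ell)).
\end{equation*}
Sending $L\to\infty$ first, the proposition reduces to showing both outer expressions converge to $\mathfrak{B}_0(B,\lambda)$ as $\ell\to\infty$.

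\textbf{Evaluating the two limits.} For the Neumann upper bound I extend a trial function on $Q_\ell$ of magnetic energy below $\lambda$ to $H^1(\R^2)$ and apply Glazman's lemma to $H_B$ together with a smooth IMS-type cutoff, obtaining $N(\lambda, H_B^N(\ell))\leq \tr\bigl(\chi_{Q_\ell} E_{(-\infty,\lambda')}(H_B)\chi_{Q_\ell}\bigr) + O(\ell) = \ell^2\mathfrak{B}_0(B,\lambda')+O(\ell)$ for any $\lambda'>\lambda$, with the $O(\ell)$ being the boundary-layer cost proportional to $|\partial Q_\ell|$. For the Dirichlet lower bound I place, in each Landau level $B(2k+1)<\lambda$, a critical-density von Neumann lattice of magnetic coherent states truncated to a slightly shrunken square $Q_{\ell-\delta}$; multiplying by a smooth cutoff $\chi$ yields roughly $\ell^2 B/(2\pi)$ near-orthonormal trial functions in $C_0^\infty(Q_\ell)$ per level. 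Since $(D-BA)(\chi u)-\chi(D-BA)u=-i(\nabla\chi)u$ depends on $\nabla\chi$ only (not on $A$), the localization cost per state is $O(\|\nabla\chi\|_\infty^2)=O(\delta^{-2})$; choosing $\delta=\sqrt{\ell}$ keeps each trial energy below $\lambda$ with a total loss of $o(\ell^2)$ in the count.

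\textbf{Main obstacle.} The hardest step is the Dirichlet lower bound. Magnetic translations $W_a$ on $\R^2$ obey the Weyl-like relation $W_a W_b = e^{iB(a_1 b_2 - a_2 b_1)/2}\, W_b W_a$, so one cannot simply tile by translates of a single eigenfunction as in the non-magnetic Weyl argument. One must instead work with a lattice of magnetic coherent states at the critical density $B/(2\pi)$ in each Landau subspace and verify that, after truncation to $Q_\ell$, the resulting vectors span a subspace of dimension $\ell^2 B/(2\pi)+o(\ell^2)$ on which the form of $H_B-\lambda$ is negative; this is the content of the classical Perelomov / magnetic-coherent-state analysis. With both bounds in place, the sandwich forces both outer limits to equal $\mathfrak{B}_0(B,\lambda)$ at every continuity point of that step function, and left-continuity of both $N(\cdot,H_B^D(L))$ and $\mathfrak{B}_0(B,\cdot)$ (by the convention $0_-^0=0$) then upgrades \eqref{eq:dos} to all $\lambda>0$.
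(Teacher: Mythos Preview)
Your route is quite different from the paper's. The paper never uses coherent states or extensions to $\R^2$; instead it introduces the magnetic-periodic operator $H_B^P(L)$ on the torus, which under the flux quantization $(2\pi)^{-1}L^2B\in\N$ has exactly the Landau levels $B(2k+1)$, each with multiplicity $L^2B/(2\pi)$ (Proposition~\ref{torus}). The upper bound is then the one-line form inequality $H_B^D(L)\geq H_B^P(L)$, giving $N(\lambda,H_B^D(L))\leq L^2\mathfrak B_0(B,\lambda)$ exactly, for every admissible $L$. For the lower bound the paper invokes Nakamura's estimate $\|(H_B^D(L)+I)^{-3}-(H_B^N(L)+I)^{-3}\|_1\leq C(B)\,L$ together with the Ky--Fan inequality to pass from Dirichlet to Neumann at the cost of $O(L)$ eigenvalues, and then $H_B^N(L)\leq H_B^P(L)$ finishes. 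Monotonicity in $L$ removes the flux constraint.

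Your Dirichlet lower bound via truncated coherent-state lattices is a legitimate alternative to the Nakamura step, though it requires more detail than you give (linear independence after truncation, uniform tail control across all lattice sites). The real gap is your Neumann upper bound. The claimed inequality
\[
N(\lambda,H_B^N(\ell))\leq \tr\bigl(\chi_{Q_\ell}E_{(-\infty,\lambda')}(H_B)\chi_{Q_\ell}\bigr)+O(\ell)
\]
is precisely the hard direction and is not produced by ``extend to $H^1(\R^2)$ and apply Glazman''. A function in $H^1(Q_\ell)$ does not extend by zero; a bounded extension operator gives no control on the magnetic energy in the collar (where $|A|\sim\ell$); and even if you had a subspace $V\subset L_2(Q_\ell)$ on which $H_B<\lambda'$, Glazman on $\R^2$ is vacuous since $E_{(-\infty,\lambda')}(H_B)$ has infinite rank. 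Bounding $\dim V$ by the localized trace requires a lower bound on $\|E_{(-\infty,\lambda')}(H_B)u\|$ for unit $u\in V$, and the obvious one, $\|Pu\|^2\geq 1-\lambda/\Lambda$ with $\Lambda$ the next Landau level above $\lambda'$, yields only $N\leq (1-\lambda/\Lambda)^{-1}\ell^2\mathfrak B_0$ with a factor strictly larger than $1$ that does \emph{not} improve as $\ell\to\infty$. This is exactly why the paper routes the upper bound through the exactly solvable torus operator rather than through $H_B$ on $\R^2$.
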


Hence $\mathfrak B_0(B,\cdot)$ is the density of states for the Landau
Hamiltonian $H_B := (D-BA)^2$ in $L_2(\R^2)$. This is certainly
well-known, but we include the proof for the sake of
completeness. This will be done in the remaining part of this section.
A different proof may be found in \cite{N}. Alternatively, one can also use the known result that
\begin{equation*}
	\lim_{L\to\infty} L^{-2} N(\lambda,H_B^D(L)) = \lim_{L\to\infty} L^{-2} \tr (\chi_{Q_L} \chi_{(0,\lambda)}(H_B)).
\end{equation*}
The RHS can be evaluated using the explicit form of the spectral projections of $H_B$, see the proof of Theorem \ref{elv2}.


\subsection{Explicit solution on the torus}

In this subsection we consider the case of a square,
$\Omega=(-L/2,L/2)^2=:Q_L$, and define an operator $H_B^P(L)$ in 
$L_2(Q_L)$ which differs from $H_B^\Omega$ by the choice of magnetic
periodic boundary conditions. However, its spectrum will turn out to
be explicitly computable.

To define $H_B^P(L)$ we shall fix $B,L>0$ such that
\begin{equation}\label{eq:flux}
(2\pi)^{-1} L^2 B \in\N
\end{equation}
and introduce the `magnetic translations'
\begin{align*}
	(T_1u)(x) & := e^{-iBLx_2/2} u(x_1+L,x_2),\\
	(T_2u)(x) & := e^{iBLx_1/2} u(x_1,x_2+L).
\end{align*}
(The dependence on $B$ and $L$ is not reflected in the notation.) The assumption \eqref{eq:flux} implies that $T_1$ and $T_2$ commute, and hence any function $u$ on $Q_L$ has a unique extension to a function $\tilde u$ on $\R^2$ by means of the operators $T_1$, $T_2$. We introduce the Sobolev spaces
\begin{equation*}
	H^k_{per}(Q_L) := \{ u\in H^k(Q_L) :\ \tilde u \in H^k_{loc}(\R^2) \}.
\end{equation*}
Then the operator $H_B^P(L):=(D-BA)^2$ in $L_2(Q_L)$ with domain $H^2_{per}(Q_L)$ is self-adjoint. It is generated by the quadratic form $\|(D-BA)u\|^2$ with form domain $H^1_{per}(Q_L)$. The spectrum of this operator is described in

\begin{proposition}\label{torus}
	Assume \eqref{eq:flux}. Then the spectrum of $H_B^P(L)$ consists of the eigenvalues $B(2k+1)$, $k\in\N_0$, with common multiplicity $(2\pi)^{-1} L^2 B$. In particular, for all $\lambda>0$,
	\begin{equation}\label{eq:torus}
		N(\lambda, H_B^P(L)) = L^2 \mathfrak B_0(B,\lambda).
	\end{equation}
\end{proposition}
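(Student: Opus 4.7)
The plan is to diagonalise $H_B^P(L)$ using magnetic ladder operators, reducing the spectral problem to counting entire functions with theta-function-type quasi-periodicities. Setting $\pi_j := D_j - BA_j$, I would introduce
\[
 a := \frac{1}{\sqrt{2B}}(\pi_1 + i\pi_2),\qquad a^* := \frac{1}{\sqrt{2B}}(\pi_1 - i\pi_2).
\]
A direct computation gives $[\pi_1,\pi_2] = iB$, hence $[a,a^*] = 1$ and $H_B^P(L) = B(2a^*a + 1)$ as quadratic forms. Since $\pi_1,\pi_2$ commute with $k_j := D_j + BA_j$, and the $k_j$ are the infinitesimal generators of $T_1, T_2$, the operators $a, a^*$ preserve magnetic periodic boundary conditions and map $H^k_{per}(Q_L) \to H^{k-1}_{per}(Q_L)$. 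Consequently every eigenvalue of $H_B^P(L)$ has the form $B(2k+1)$ with $k\in\N_0$: applying $a$ to an eigenfunction strictly lowers the eigenvalue by $2B$, and positivity of $a^*a$ forces the process to terminate in $\ker a$.

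The core of the argument is to show $\dim\ker a = N_\Phi := (2\pi)^{-1}L^2B$. Writing $u(x_1,x_2) = v(z)\,e^{-B|z|^2/4}$ with $z = x_1 + ix_2$, the equation $au = 0$ reduces to $\partial_{\bar z}v = 0$, so $v$ is entire, while the conditions $T_1u = u$ and $T_2u = u$ translate, after short manipulations, into the quasi-periodicities
\[
 v(z+L) = e^{BLz/2 + BL^2/4}\, v(z),\qquad v(z+iL) = e^{-iBLz/2 + BL^2/4}\, v(z).
\]
To count such $v$, I would substitute $v(z) = w(z)\,e^{Bz^2/4}$; the first condition then becomes $w(z+L) = w(z)$, so $w$ is entire and $L$-periodic in the complex variable $z$ and hence admits a Laurent expansion $w(z) = \sum_{n\in\Z} c_n e^{2\pi i n z/L}$. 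Inserting this into the second condition and using the flux quantisation $N := L^2 B/(2\pi) \in \N$ yields the recursion
\[
 c_{m+N} = e^{-\pi N - 2\pi m}\, c_m,\qquad m\in\Z,
\]
which determines all coefficients from the $N$ free parameters $c_0,\dots,c_{N-1}$. A short verification shows the resulting coefficients decay like $e^{-\pi N j^2}$ along both tails, so the series converges to an entire function of finite $L_2(Q_L)$ norm. Thus $\dim\ker a = N_\Phi$.

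Finally, $[H_B^P(L), a^*] = 2B a^*$ together with $aa^* = a^*a + 1$ shows that $a^*/\sqrt{k+1}$ is an isometric bijection between consecutive Landau eigenspaces, so every level carries the common multiplicity $N_\Phi$. Summing over those $k$ with $B(2k+1) < \lambda$ then gives \eqref{eq:torus}. The main technical obstacle is the dimension count in the middle paragraph: one must verify both that the substitution and recursion parametrise \emph{all} admissible $v$ (which hinges on uniqueness of the Laurent expansion combined with a correct translation of both quasi-periodicities) and that the formal recursion yields a bona fide entire function in $L_2(Q_L)$. The remaining ladder-operator arguments are then a short finite-dimensional exercise.
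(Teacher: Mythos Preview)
Your ladder-operator setup is identical to the paper's: the operator you call $a$ is $(2B)^{-1/2}$ times the paper's $Q$, and the identity $H_B^P(L)=B(2a^*a+1)$ is exactly $H_B^P(L)=Q^*Q+B$. The difference lies entirely in how the common multiplicity is computed.

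The paper determines $m$ \emph{indirectly}: once the spectrum is known to be $\{B(2k+1):k\in\N_0\}$ with a common multiplicity $m$, one has $N(\lambda,H_B^P(L))\sim m\lambda/(2B)$, while Weyl asymptotics (valid for periodic boundary conditions by comparison with Dirichlet and Neumann) give $N(\lambda,H_B^P(L))\sim \lambda L^2/(4\pi)$. Matching the two gives $m=L^2B/(2\pi)$ in one line. A footnote mentions the Aharonov--Casher viewpoint as an alternative.

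Your route is precisely that alternative, carried out by hand: you compute $\dim\ker a$ directly by identifying it with the space of entire functions satisfying theta-type quasi-periodicities, then parametrise those via a Laurent expansion and the recursion $c_{m+N}=e^{-\pi N-2\pi m}c_m$. This is correct and fully self-contained --- it does not invoke Weyl asymptotics as external input --- and in addition yields an explicit basis of the lowest Landau level. The cost is that you must verify convergence of the resulting series and that it exhausts all solutions, which you flag but do not carry out in detail; the paper's argument avoids this bookkeeping entirely. Both approaches are standard; the paper's is shorter, yours is more constructive.
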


We recall the proof from \cite{CV}.

\begin{proof}
	Consider the closed operator $Q:=(D_1-BA_1)+i(D_2-BA_2)$ with domain $H^1_{per}(Q_L)$. Its adjoint is given by $Q^*:=(D_1-BA_1)-i(D_2-BA_2)$ with domain $H^1_{per}(Q_L)$ and one has
	\begin{equation*}
		\|(D-BA)u\|^2 = \|Qu\|^2 + B \|u\|^2 = \|Q^*u\|^2 - B \|u\|^2,
		\qquad u\in H^1_{per}(Q_L).
	\end{equation*}
	Hence $H_B^P(L)=Q^*Q+B$ and $Q Q^* - Q^* Q = 2B$. By standard
        arguments using these commutation relations one computes
        the spectrum of $H_B^P(L)$ to consist of the eigenvalues
        $B(2k+1)$, 
        $k\in\N_0$, with a common multiplicity, say $m$. 
To determine $m$ we note that 
$$N(\lambda, H_B^P(L))=m\#\{k\in\N_0 :\ B(2k+1)<\lambda\}\sim
m\lambda/2B\quad\mbox{as}\quad \lambda\to\infty\,.$$ 
On the other hand, the Weyl-type asymptotics on the counting function 
holds true for the Dirichlet and the Neumann boundary conditions, 
and hence also for the periodic operator,
$$N(\lambda, H_B^P(L)) \sim \lambda L^2/4\pi\,\quad\mbox{as}\quad
\lambda\to\infty\,.$$
Comparing the two asymptotics above one finds that $m=L^2 B/2\pi$.
\footnote{Alternatively, 
we may determine $m$ using the Aharonov-Casher theorem. 
Indeed, the multiplicity 
$m$ is the dimension of the kernel of the Pauli operator 
$(\sigma\cdot(D-B A))^2$ acting on the sections of a complex line
bundle over the torus $(\R/L\Z)^2$.}
\end{proof}


\subsection{Boundary conditions}

In this subsection we shall quantify the intuition that a change of
the boundary conditions of a differential operator has only a
relatively small effect on the overall eigenvalue distribution. We
shall denote by $H_B^N(L)$ the operator $(D-BA)^2$ with (magnetic)
Neumann boundary conditions in $\Omega=Q_L=(-L/2,L/2)^2$,  that is the
operator generated by the quadratic form $\|(D-BA)u\|^2$ with form
domain $H^1(Q_L)$. We denote by $\|K\|_1 = \tr(K^*K)^{1/2}$ the trace
norm of a trace class operator $K$.

A special case of a result by Nakamura \cite{N} (who also allows for a variable magnetic field and an electric potential) is

\begin{proposition}\label{nakamura}
Let $m\in\N$ and $B>0$. Then there exists a constant $C_m(B)>0$ such that for all $L\geq 1$
\begin{equation}\label{eq:nakamura}
\| (H^D_B(L)+I)^{-2m-1}-(H^N_B(L)+I)^{-2m-1} \|_1 \leq C_m(B) L.
\end{equation}
\end{proposition}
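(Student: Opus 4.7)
The plan is to reduce the estimate to a bound on the trace norm of the heat semigroup difference. Starting from the Laplace transform identity
\begin{equation*}
(H+I)^{-2m-1}=\frac{1}{(2m)!}\int_0^\infty t^{2m}e^{-t}\,e^{-tH}\,dt,
\end{equation*}
one obtains
\begin{equation*}
\|(H_B^D(L)+I)^{-2m-1}-(H_B^N(L)+I)^{-2m-1}\|_1\leq\frac{1}{(2m)!}\int_0^\infty t^{2m}e^{-t}\,\|e^{-tH_B^D(L)}-e^{-tH_B^N(L)}\|_1\,dt,
\end{equation*}
and so it is enough to bound $\|e^{-tH_B^D(L)}-e^{-tH_B^N(L)}\|_1$ by a function of $t$ (possibly depending on $B$) that is linear in $L$ and integrable against $t^{2m}e^{-t}$.

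The core of the argument is a pointwise comparison of the two heat kernels for small $t$. Both kernels should agree with the full-plane Landau kernel $e^{-tH_B}(x,y)$, whose explicit Mehler form is available, up to exponentially small corrections outside a neighbourhood of $\partial Q_L$ of width $\sqrt{t}$. I would take this full-plane kernel as a parametrix, correct it by local reflections near each side of $Q_L$ to obtain candidate kernels satisfying the two boundary conditions, and control the remainder via Duhamel's principle. This should yield a pointwise bound of the form
\begin{equation*}
|(e^{-tH_B^D(L)}-e^{-tH_B^N(L)})(x,y)|\leq C(B)\,p_t(x,y)\,\mathbf{1}_{\{\mathrm{dist}(x,\partial Q_L)\lesssim\sqrt{t}\}}
\end{equation*}
with $p_t$ dominated by a Gaussian. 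Integrating the diagonal over the boundary strip of width $\sqrt{t}$ and total length $4L$ gives $\|e^{-tH_B^D(L)}-e^{-tH_B^N(L)}\|_1\leq C(B)\,L\,t^{-1/2}$ for $t\leq 1$. For $t\geq 1$ the elementary identity
\begin{equation*}
e^{-tH_B^D(L)}-e^{-tH_B^N(L)}=(e^{-H_B^D(L)}-e^{-H_B^N(L)})e^{-(t-1)H_B^N(L)}+e^{-H_B^D(L)}(e^{-(t-1)H_B^D(L)}-e^{-(t-1)H_B^N(L)})
\end{equation*}
together with $\|e^{-sH_B^{D/N}(L)}\|\leq 1$ and an induction in integer steps reduces everything to the $t=1$ estimate, yielding $\|e^{-tH_B^D(L)}-e^{-tH_B^N(L)}\|_1\leq C(B)\,L\,(1+t)$. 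These two bounds are both integrable against $t^{2m}e^{-t}$ and together produce the advertised $C_m(B)L$.

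The main obstacle is the boundary-layer analysis in the magnetic setting: the symmetric gauge $A(x)=\tfrac12(-x_2,x_1)^T$ is not invariant under reflection across the sides of $Q_L$, so the classical method of images cannot be applied verbatim. One way around this is to precompose with a suitable local gauge transformation in a tubular neighbourhood of each side before reflecting, patching the four contributions together at the corners. An appealing alternative is to use Krein's resolvent formula for the pair $(H_B^D(L),H_B^N(L))$, which expresses the resolvent difference directly as a composition of trace-ideal operators acting through boundary traces whose Schatten norms scale with $|\partial Q_L|=4L$; this bypasses the parametrix construction at the cost of some abstract machinery. Either route is technically involved but introduces no $L$-dependence beyond the boundary area, yielding the required linear scaling in $L$.
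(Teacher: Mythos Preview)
The paper does not actually prove this proposition: it is stated as ``a special case of a result by Nakamura \cite{N}'' and no argument is given. So there is no in-paper proof to compare your proposal against; your sketch is an attempt to supply what the authors chose to import.

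That said, your heat-kernel route contains a real gap. The step ``integrating the diagonal over the boundary strip \ldots\ gives $\|e^{-tH_B^D}-e^{-tH_B^N}\|_1\leq C(B)Lt^{-1/2}$'' is not justified: for a sign-indefinite operator the trace norm is \emph{not} bounded by $\int|K(x,x)|\,dx$, and a pointwise Gaussian domination of the kernel only controls the Hilbert--Schmidt norm. If you try to recover $\|\cdot\|_1$ by factoring $e^{-tH^D}-e^{-tH^N}=A(A-B)+(A-B)B$ with $A=e^{-(t/2)H^D}$, $B=e^{-(t/2)H^N}$ and using $\|\cdot\|_1\leq\|\cdot\|_2\|\cdot\|_2$, the naive count gives $\|A\|_2\sim L t^{-1/2}$ and $\|A-B\|_2\sim L^{1/2}t^{-1/4}$, hence $L^{3/2}$ rather than $L$. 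Getting the correct linear scaling requires showing that the difference is localized near $\partial Q_L$ on \emph{both} sides (range and corange), which is exactly what a boundary-trace representation delivers. Your second suggestion---Krein's resolvent formula---does precisely this: it writes $(H^D+I)^{-1}-(H^N+I)^{-1}$ as $\gamma^* M \gamma$ with $\gamma$ a Poisson-type operator onto $L_2(\partial Q_L)$, so the Schatten estimates scale with $|\partial Q_L|=4L$ directly, and one then iterates to higher resolvent powers. This is also much closer in spirit to Nakamura's actual argument, which works with resolvents and boundary cutoffs rather than heat kernels. I would drop the parametrix/reflection paragraph and develop the Krein route instead.
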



\subsection{Proof of Proposition \ref{dos}}

Throughout the proof, $B$ will be fixed and, for the sake of
simplicity, dropped from the notation. First note that since
\begin{equation*}
	N(\lambda,H^D(L')) \leq N(\lambda,H^D(L)) \leq N(\lambda,H^D(L''))
\end{equation*}
for $L'\leq L\leq L''$ it suffices to prove Proposition \ref{dos} only
for $L\to\infty$ with the flux constraint \eqref{eq:flux}, which we
shall assume henceforth. One has $H^D(L)\geq H^P(L)$ and hence by the
variational principle
\begin{equation*}
	N(\lambda,H^D(L))\leq N(\lambda,H^P(L)).
\end{equation*}
In view of Proposition \ref{torus} this proves the upper bound in \eqref{eq:dos}.

To prove the lower bound we write
\begin{equation*}
	N(\lambda,H^D(L)) = n((\lambda+1)^{-3}, (H^D(L)+I)^{-3})
\end{equation*}
where $n(\kappa,K)$ denotes the number of singular values larger than
$\kappa$ of a compact operator $K$. 
Now by the Ky-Fan inequality \cite[Ch.~11 Sec.~1]{BS} 
for any $\epsilon>0$
\begin{align*}
	& n((\lambda+1)^{-3}, (H^D(L)+I)^{-3}) \\
	& \qquad \geq n((1+\epsilon)(\lambda+1)^{-3}, (H^N(L)+I)^{-3}) \\
	& \qquad \qquad - n(\epsilon(\lambda+1)^{-3}, (H^N(L)+I)^{-3}-(H^D(L)+I)^{-3}).
\end{align*}
We treat the two terms on the RHS separately. The second one can be
estimated using Proposition \ref{nakamura} as follows,
\begin{align*}
	& n(\epsilon(\lambda+1)^{-3}, (H^N(L)+I)^{-3}-(H^D(L)+I)^{-3}) \\
	& \qquad \leq n(\epsilon(\lambda+1)^{-3}, (H^N(L)+I)^{-3}-(H^D(L)+I)^{-3}) \\
	& \qquad \leq \epsilon^{-1}(\lambda+1)^3 \| (H^N(L)+I)^{-3}-(H^D(L)+I)^{-3} \|_1 \\
	& \qquad \leq \epsilon^{-1}(\lambda+1)^3 C_3(B) L.
\end{align*}
On the other hand, writing $\lambda_\epsilon := (1+\epsilon)^{-1/3}(\lambda+1) - 1$ and applying Proposition~\ref{torus} one finds that for $L^2\in 2\pi B^{-1}\N$
\begin{align*}
	n((1+\epsilon)(\lambda+1)^{-3}, (H^N(L)+I)^{-3})
	& = N( \lambda_\epsilon, H^N(L)) \\
	& \geq N( \lambda_\epsilon, H^P(L))
	= L^2 \mathfrak B_0(B, \lambda_\epsilon ).
\end{align*}
Noting that $\lambda_\epsilon<\lambda$ and that $\mathfrak B_0(B,\lambda)$ is left-continuous in $\lambda$ we see that for all sufficiently small $\epsilon>0$ one has
\begin{equation*}
	\mathfrak B_0(B, \lambda_\epsilon) = \mathfrak B_0(B,\lambda).
\end{equation*}
Collecting all the estimates we find that as $L\to\infty$ with $L^2\in 2\pi B^{-1}\N$
\begin{equation*}
	\liminf L^{-2} N(\lambda,H^D(L))
	\geq \mathfrak B_0(B,\lambda).
\end{equation*}
This proves the lower bound in \eqref{eq:dos}.


\section{Proof of the main results}\label{sec:proofs}

\subsection{Non-convex moments for tiling domains}\label{sec:polya}

This subsection is devoted to the proof of Theorem \ref{main2}. We
assume that $\Omega$ is tiling, so we can write
\begin{equation*}
	\R^2 = \bigcup_{n\in\Z^2} \Omega_n
	\qquad\text{up to measure }\, 0
\end{equation*}
where $\Omega_0=\Omega$ and all the $\Omega_n$ are disjoint and congruent to $\Omega$. For $L>0$ let $Q_L=(-L/2,L/2)^2$ and
\begin{equation*}
	J_L:=\{n\in\Z^2 :\ \Omega_n\subset Q_L \},
	\qquad \Omega^L := \operatorname{int}\left(\operatorname{clos}\bigcup_{n\in J_L} \Omega_n \right).
\end{equation*}
We note that
\begin{equation}\label{eq:growth}
	\lim_{L\to\infty} L^{-2} \# J_L = |\Omega|^{-1}.
\end{equation}
Moreover, one has the operator inequalities
\begin{equation*}
	H_B^{Q_L} \leq H_B^{\Omega^L} \leq \sum_{n\in J_L} \oplus H_B^{\Omega_n}.
\end{equation*}
(The first inequality is, of course, understood in terms of the natural embedding $L_2(\Omega^L)\subset L_2(Q_L)$ by extension by zero.) Noting that all the $H_B^{\Omega_n}$ are unitarily equivalent we obtain from the variational principle that
\begin{equation*}
	N(\lambda,H_B^\Omega) \leq (\# J_L)^{-1} N(\lambda,H_B^{Q_L}).
\end{equation*}
The bound \eqref{eq:main2number} follows now from \eqref{eq:growth} and Proposition~\ref{dos} by letting $L$ tend to infinity. This implies also the sharpness of \eqref{eq:main2number}. Indeed, by Proposition~\ref{dos} for any $\epsilon>0$, $B>0$ and $\lambda>0$ there exists a cube $\Omega$ satisfying \eqref{eq:main2counter} for $\gamma=0$.

To prove \eqref{eq:main2moments} we write, in the spirit of \cite{AL},
\begin{equation}\label{eq:liftinggamma}
	\tr(H_B^\Omega-\lambda)_-^\gamma = \gamma \int_0^\infty N(\lambda-\mu,H_B^\Omega) \mu^{\gamma-1}\,d\mu
\end{equation}
and 
\begin{equation}\label{eq:liftinggammab}
	\mathfrak B_\gamma(B,\lambda) = \gamma \int_0^\infty \mathfrak B_0(B,\lambda-\mu) \mu^{\gamma-1} \,d\mu.
\end{equation}
Hence \eqref{eq:main2moments} follows from \eqref{eq:main2number}. Moreover, Proposition \ref{dos}, the formulae \eqref{eq:liftinggamma}, \eqref{eq:liftinggammab} and an easy approximation argument based on \eqref{eq:main2number} imply that
\begin{equation*}
	\lim L^{-2} \tr(H_B^D(L)-\lambda)_-^\gamma = \mathfrak B_\gamma(B,\lambda).
\end{equation*}
As before, this proves the sharpness of the estimate \eqref{eq:main2moments} and concludes the proof of Theorem~\ref{main2}.


\subsection{Convex moments for arbitary domains}\label{sec:elv}

From now on we shall consider arbitrary, not necessarily tiling domains $\Omega$. Our goal is to prove

\begin{theorem}\label{elv2}
	Let $\Omega\subset\R^2$ be a domain of finite measure and let $\gamma\geq 1$. Then for all $B>0$ and $\lambda>0$,
	\begin{equation}\label{eq:elv2proof}
		\tr(H_B^\Omega-\lambda)_-^\gamma \leq \mathfrak B_\gamma(B,\lambda) |\Omega|.
	\end{equation}
\end{theorem}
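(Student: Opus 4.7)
The plan is to execute the standard Berezin--Li--Yau argument adapted to the magnetic setting, using the explicit spectral structure of the free Landau Hamiltonian $H_B=(D-BA)^2$ on $L_2(\R^2)$. Let $(\mu_j,\psi_j)$ denote the eigenpairs of $H_B^\Omega$, with $\|\psi_j\|=1$. Since $\psi_j$ lies in the form domain of $H_B^\Omega$, its extension by zero $\tilde\psi_j$ belongs to $H^1(\R^2)$, hence to the form domain of $H_B$, and the matching of magnetic quadratic forms gives
\begin{equation*}
\mu_j=\|(D-BA)\psi_j\|_{L_2(\Omega)}^2=\|(D-BA)\tilde\psi_j\|_{L_2(\R^2)}^2=\langle\tilde\psi_j,H_B\tilde\psi_j\rangle.
\end{equation*}
Decomposing $H_B=\sum_{k\in\N_0}B(2k+1)P_k$ into its Landau-level projections $P_k$ and setting $a_{jk}:=\|P_k\tilde\psi_j\|^2\geq 0$, one has $\sum_k a_{jk}=\|\tilde\psi_j\|^2=1$, and the spectral theorem yields $\mu_j=\sum_k B(2k+1)\,a_{jk}$; thus each Dirichlet eigenvalue is exhibited as a convex combination of Landau levels.

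The hypothesis $\gamma\geq 1$ enters exactly here, to make $t\mapsto(\lambda-t)_+^\gamma$ convex. Jensen's inequality applied termwise gives $(\lambda-\mu_j)_+^\gamma\leq\sum_k a_{jk}\,(\lambda-B(2k+1))_+^\gamma$, and summing over $j$ while interchanging the two non-negative sums leads to
\begin{equation*}
\tr(H_B^\Omega-\lambda)_-^\gamma\leq\sum_{k\in\N_0}(\lambda-B(2k+1))_+^\gamma\,\tr(\chi_\Omega P_k\chi_\Omega),
\end{equation*}
since $\{\tilde\psi_j\}$ is an orthonormal basis of the subspace $\chi_\Omega L_2(\R^2)\simeq L_2(\Omega)$. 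The conclusion \eqref{eq:elv2proof} follows at once from the classical identity that the integral kernel of each Landau-level projection has constant diagonal $P_k(x,x)=B/(2\pi)$, whence $\tr(\chi_\Omega P_k\chi_\Omega)=(B/2\pi)|\Omega|$ and the right hand side collapses to $\mathfrak B_\gamma(B,\lambda)|\Omega|$. This diagonal value can be taken either from the explicit symmetric-gauge Landau eigenfunctions or, consistently with the rest of the paper, read off from the density-of-states calculation behind Proposition~\ref{dos}, in which the multiplicity $L^2B/(2\pi)$ of each level on the magnetic torus (Proposition~\ref{torus}) divided by $L^2$ produces exactly $B/(2\pi)$ per level.

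The conceptual steps are short; the minor obstacles are bookkeeping. One must interpret the spectral identity $\mu_j=\sum_k B(2k+1)a_{jk}$ as a statement about the quadratic form of $H_B$ (the $\tilde\psi_j$ need not lie in the operator domain), which is legitimate because both sides equal $\langle\tilde\psi_j,H_B\tilde\psi_j\rangle$ via the spectral theorem. The interchange of sums is automatic by non-negativity. The role of $\gamma\geq 1$ is \emph{precisely} to supply the convexity needed for Jensen; for $\gamma<1$ the function is concave and the inequality would reverse, consistent with the excess-factor phenomenon exhibited in Theorem~\ref{main1}.
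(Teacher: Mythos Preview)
Your proof is correct and is essentially the same as the paper's: the paper invokes the Berezin--Lieb inequality $\tr\,\phi(H_B^\Omega)\leq\tr\,\chi_\Omega\phi(H_B)$ as a black box and then evaluates the right-hand side via $P_B^{(k)}(x,x)=B/(2\pi)$, while you have unpacked that inequality into its standard ingredients (extension by zero, spectral decomposition into Landau levels, Jensen's inequality). The arguments coincide step for step once one recognizes that your eigenfunction-by-eigenfunction Jensen estimate is precisely the proof of the Berezin--Lieb inequality in this setting.
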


As we will explain after Corollary \ref{elv} this improves slightly the main result of \cite{ELV}.

\begin{proof}
In the case $\Omega=\R^2$ we write $H_B$ instead of $H_B^\Omega$. By
the variational principle and the Berezin-Lieb inequality (see
\cite{B2}, \cite{Li1} and also \cite{LaS}, \cite{La}), one has for
any non-negative, convex function $\phi$ vanishing at infinity that
\begin{equation*}
	\tr \phi(H_B^\Omega) \leq \tr\chi_\Omega\phi(H_B).
\end{equation*}
Now, if $P_B^{(k)}$ denotes the spectral projection of $H_B$ corresponding to the $k$-th Landau level,
\begin{equation*}
	\phi(H_B) = \sum_{k\in\N_0} \phi(B(2k+1)) P_B^{(k)}.
\end{equation*}
To evaluate the above trace we recall that the integral kernel of
$P_B^{(k)}$ is constant on the diagonal (this follows from the
translation invariance of the Landau Hamiltonian) and has the value
\begin{equation*}
	P_B^{(k)}(x,x)= \frac B{2\pi}.
\end{equation*}
(This is easily seen by diagonalizing $H_B$ with the help of a
harmonic oscillator, see also \cite{F}.) It follows that
$\tr\chi_\Omega P_B^{(k)} = B |\Omega| / 2\pi$. \footnote{To justify
  this, identify the LHS as the square of the Hilbert-Schmidt norm of
  $\chi_\Omega P_B^{(k)}$ and use that $\int |P_B^{(k)}(x,y)|^2\,dy =
  P_B^{(k)}(x,x)$ since $P_B^{(k)}$ is a projection.} This proves that
\begin{equation*}
	\tr \phi(H_B^\Omega) \leq \frac{B |\Omega|} {2\pi} \sum_{k\in\N_0} \phi(B(2k+1)).
\end{equation*}
Specializing to the case $\phi(\mu)=(\mu-\lambda)_-^\gamma$, $\gamma\geq 1$, one obtains the estimate \eqref{eq:elv2proof}. 
\end{proof}


\subsection{Diamagnetic inequalities for the semi-classical symbol}\label{sec:convex}

This subsection illustrates on a semi-classical level the effects that appear when passing from the `magnetic symbol' $\mathfrak B_\gamma(B,\lambda)$ appearing in Theorem \ref{main2} to the `non-magnetic symbol' $L_{\gamma,2}^{\cl} \lambda^{\gamma+1}$ appearing in Theorem \ref{main1}. The convex case $\gamma\geq 1$ appears to be different from the non-convex case $0<\gamma<1$. We shall prove

\begin{proposition}\label{convex}
  Let $\gamma\geq 0$ and $B>0$. Then
  \begin{equation*}
    \sup_{\lambda>0} 
    \frac{\mathfrak B_\gamma(B,\lambda)}{L_{\gamma,2}^{\cl}
    \lambda^{\gamma+1}} 
    = \left\{
    \begin{array}{l@{\qquad\text{if}\;\;}l}
      2 & \gamma=0, \\
      2\left(\frac\gamma{\gamma+1}\right)^\gamma & 0<\gamma<1, \\
      1 & \gamma>1.
    \end{array} 
    \right.
  \end{equation*}
  Moreover, for $0<\gamma<1$ the supremum is attained for $\lambda =
  B(\gamma+1)$ and for $\gamma=0$ the supremum is attained in the
  limit $\lambda\to B+$.
\end{proposition}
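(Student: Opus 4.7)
The plan is to scale out $B$ and then handle the three regimes $\gamma=0$, $0<\gamma<1$, and $\gamma>1$ separately, using convexity or concavity of $x\mapsto x^\gamma$. Setting $\mu:=\lambda/B$ in the definitions of $\mathfrak B_\gamma$ and $L_{\gamma,2}^{\cl}$, the ratio of interest reduces to the $B$-independent quantity
\begin{equation*}
F_\gamma(\mu) \,:=\, \frac{2(\gamma+1)}{\mu^{\gamma+1}} \sum_{k\in\N_0} \bigl(\mu-(2k+1)\bigr)_+^\gamma,
\end{equation*}
so it suffices to compute $\sup_{\mu>0} F_\gamma(\mu)$.

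The two outer cases are quick. For $\gamma=0$: on $\mu\in(2j+1,2j+3]$ (using the left-continuity convention $0^0_-:=0$) exactly $j+1$ terms of the sum equal $1$, so $F_0(\mu)=2(j+1)/\mu$, which is largest as $\mu\to(2j+1)+$ with limit $2(j+1)/(2j+1)$; this is maximal at $j=0$, giving $\sup F_0=2$ attained as $\mu\to 1+$ (equivalently $\lambda\to B+$). For $\gamma>1$ the function $t\mapsto(\mu-t)_+^\gamma$ is convex on $\R$, and the midpoint form of Jensen's inequality gives $2(\mu-(2k+1))_+^\gamma\leq\int_{2k}^{2k+2}(\mu-t)_+^\gamma\,dt$; summing over $k\in\N_0$ and evaluating $\int_0^\infty(\mu-t)_+^\gamma\,dt=\mu^{\gamma+1}/(\gamma+1)$ yields $F_\gamma(\mu)\leq 1$, and a Riemann-sum argument as $\mu\to\infty$ makes this sharp.

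The delicate case is $0<\gamma<1$, where $x\mapsto x^\gamma$ is \emph{concave} and the midpoint-versus-integral comparison of the previous case fails in the wrong direction; this is the main obstacle. My plan is to replace it with Jensen's inequality applied to \emph{all} the nonzero summands at once. For $\mu\in(2j+1,2j+3]$ the positive quantities $a_k=\mu-(2k+1)$, $k=0,\ldots,j$, have arithmetic mean $\mu-(j+1)$, so concavity gives
\begin{equation*}
\sum_{k=0}^j a_k^\gamma \,\leq\, (j+1)\bigl(\mu-(j+1)\bigr)^\gamma.
\end{equation*}
Writing $\alpha=j+1$, the problem reduces to maximizing $\alpha(\mu-\alpha)^\gamma/\mu^{\gamma+1}$ over $\mu>\alpha$. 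An elementary calculus computation locates the maximum at $\mu=(\gamma+1)\alpha$ with value $\gamma^\gamma/(\gamma+1)^{\gamma+1}$, which crucially does not depend on $\alpha$; multiplying by $2(\gamma+1)$ gives the claimed upper bound $2(\gamma/(\gamma+1))^\gamma$. Sharpness follows by taking $j=0$ and $\mu=\gamma+1\in(1,2]$: only the $k=0$ term is present, Jensen's inequality is an equality, and direct substitution gives $F_\gamma(\gamma+1)=2(\gamma/(\gamma+1))^\gamma$, confirming both the value of the supremum and its attainment at $\lambda=B(\gamma+1)$.
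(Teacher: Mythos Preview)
Your proof is correct. The cases $\gamma=0$ and $\gamma>1$ are handled essentially as in the paper (direct counting, respectively the midpoint/Jensen inequality for convex functions). For $0<\gamma<1$, however, your argument is genuinely different. The paper bootstraps from the already established case $\gamma=1$ via Lemma~\ref{goingdown}: one writes $(\lambda-E)_+^\gamma \leq C(\gamma,1)(\mu-\lambda)^{\gamma-1}(\mu-E)_+$ for any $\mu>\lambda$, sums over the Landau levels, applies the $\gamma=1$ bound $\mathfrak B_1(1,\mu)\leq L_{1,2}^{\cl}\mu^2$, and then optimizes in $\mu$. You instead apply Jensen's inequality for the \emph{concave} function $x\mapsto x^\gamma$ directly to the finitely many nonzero summands, reducing to the one-variable maximization of $\alpha(\mu-\alpha)^\gamma/\mu^{\gamma+1}$, whose optimum is independent of $\alpha$. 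Your route is more self-contained and avoids any auxiliary lemma; the paper's route has the advantage of reusing Lemma~\ref{goingdown}, which is the same device needed in Subsection~\ref{sec:proofs} to pass from $\tr(H_B^\Omega-\mu)_-$ to $\tr(H_B^\Omega-\lambda)_-^\gamma$ for the operator itself. Both methods identify the same extremizer $\mu=\gamma+1$ (i.e.\ $\lambda=B(\gamma+1)$), where only the lowest Landau level contributes.
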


We shall need the elementary

\begin{lemma}\label{goingdown}
	Let $\sigma>\gamma\geq 0$ and $\mu>\lambda$. Then for all $E\geq 0$
	\begin{equation*}
		(E-\lambda)_-^\gamma \leq C(\gamma,\sigma) (\mu-\lambda)^{-\sigma+\gamma} (E-\mu)_-^\sigma
	\end{equation*}
	with 	$C(0,\sigma):= 1$ if $\gamma=0$ and $C(\gamma,\sigma):= \sigma^{-\sigma}\gamma^\gamma (\sigma-\gamma)^{\sigma-\gamma}$ if $\sigma>\gamma >0$.
\end{lemma}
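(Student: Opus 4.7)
My plan is to reduce the inequality to a one-variable optimization by a convenient change of variables, then compute the optimal constant explicitly.

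First I would dispose of the trivial range $E\geq\mu$. In that case $(E-\mu)_-^\sigma=0$, and since $\mu>\lambda$ we also have $E>\lambda$, so $(E-\lambda)_-^\gamma=0$, and the inequality holds with equality $0\leq 0$. So it suffices to consider $E<\mu$, in which case $(E-\mu)_-^\sigma=(\mu-E)^\sigma$, and the further sub-case $E\geq\lambda$ again gives a vacuous $0\leq(\text{non-negative})$. Hence the only non-trivial range is $E<\lambda$, where the claim becomes
\begin{equation*}
  (\lambda-E)^\gamma\leq C(\gamma,\sigma)\,(\mu-\lambda)^{\gamma-\sigma}(\mu-E)^\sigma.
\end{equation*}

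Next, set $t:=\mu-E>0$ and $s:=\mu-\lambda>0$; then $\lambda-E=t-s$ and the hypothesis $E<\lambda$ becomes $t>s$. Dividing through by $t^\sigma$ and introducing $u:=s/t\in(0,1)$, the inequality is equivalent to
\begin{equation*}
  f(u):=(1-u)^\gamma u^{\sigma-\gamma}\leq C(\gamma,\sigma),\qquad u\in(0,1).
\end{equation*}
So the whole lemma reduces to bounding $\sup_{u\in(0,1)}f(u)$.

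Finally I would optimize. For $\gamma=0$ one has $f(u)=u^\sigma\leq 1$ on $(0,1]$, giving $C(0,\sigma)=1$. For $\sigma>\gamma>0$, taking the logarithmic derivative yields the critical point $u_\ast=(\sigma-\gamma)/\sigma$ with $1-u_\ast=\gamma/\sigma$, and a direct substitution gives
\begin{equation*}
  f(u_\ast)=\left(\tfrac{\gamma}{\sigma}\right)^{\gamma}\left(\tfrac{\sigma-\gamma}{\sigma}\right)^{\sigma-\gamma}=\sigma^{-\sigma}\gamma^\gamma(\sigma-\gamma)^{\sigma-\gamma}=C(\gamma,\sigma),
\end{equation*}
which is the maximum since $f$ vanishes at the endpoints. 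There is no real obstacle here; the only care needed is making sure the boundary cases $E\geq\mu$ and $\lambda\leq E<\mu$ are handled before dividing by $t^\sigma$ and $\mu-\lambda$.
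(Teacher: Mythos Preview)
Your proof is correct and essentially identical to the paper's: the paper's one-line argument is ``maximize $(\lambda-E)^\gamma(\mu-\lambda)^{\sigma-\gamma}$ as a function of $\lambda$ on $(E,\mu)$,'' which after the scaling $v=(\lambda-E)/(\mu-E)$ is exactly your optimization of $f(u)=(1-u)^\gamma u^{\sigma-\gamma}$ with $u=1-v$. Your handling of the boundary cases and the explicit computation of the critical point just fill in the details the paper omits.
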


For the proof of Lemma \ref{goingdown} one just has to maximize $(\lambda-E)^\gamma (\mu-\lambda)^{\sigma-\gamma}$ as function of $\lambda$ on the interval $(E,\mu)$. 

\begin{proof}[Proof of Proposition \ref{convex}]
	By scaling, we may assume $B=1$. First let $\gamma\geq 1$ and note that the function $\phi(\mu):=(\lambda-\mu)_+^\gamma$ is convex. Then by the mean value property of convex functions
\begin{equation*}
	\phi(2k+1) \leq \frac1{2} \int_{2k}^{2k+2} \phi(\mu)\,d\mu.
\end{equation*}
Summing over $k\in\N_0$ yields the assertion in the case $\gamma\geq 1$.

	Now let $0\leq\gamma<1$. Lemma \ref{goingdown} with $\sigma=1$ together with the inequality that we have already proved implies that for any $\mu>\lambda$
\begin{align*}
	\mathfrak B_\gamma(1,\lambda)			
	& \leq C(\gamma,1) (\mu-\lambda)^{-1+\gamma} \mathfrak B_1(1,\mu) \\
	& \leq C(\gamma,1) L_{1,2}^{\cl} (\mu-\lambda)^{-1+\gamma} \mu^2
\end{align*}
Applying the lemma again, i.e. optimizing in $\mu$, yields the estimate
\begin{align*}
	\tr(H_B^\Omega-\lambda)_-^\gamma \leq R_\gamma L_{\gamma,2}^{\cl} |\Omega| \lambda^2
\end{align*}
where
\begin{align}\label{eq:rgamma}
	R_{\gamma} 
	= \frac{C(\gamma,1)}{C(\gamma+1,2)} \frac{L_{1,2}^{\cl}}{L_{\gamma,2}^{\cl}}
	= 2 \left(\frac\gamma{\gamma+1}\right)^\gamma.
\end{align}
This proves the claimed upper bound on the supremum in the proposition. Choosing $\lambda$ as stated shows that this upper bound is sharp.
\end{proof}

Combining Theorem \ref{elv} with Proposition \ref{convex} we obtain

\begin{corollary}\label{elv}
	Let $\Omega\subset\R^2$ be a domain of finite measure and let $\gamma\geq 1$. Then for all $B>0$ and $\lambda>0$,
	\begin{equation}\label{eq:elvproof}
		\tr(H_B^\Omega-\lambda)_-^\gamma \leq L_{\gamma,2}^{\cl} |\Omega| \lambda^{\gamma+1}.
	\end{equation}
\end{corollary}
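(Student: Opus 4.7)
The plan is to combine the two immediately preceding results. Theorem \ref{elv2} gives, for any domain $\Omega\subset\R^2$ of finite measure and any $\gamma\geq 1$, the bound
\begin{equation*}
	\tr(H_B^\Omega-\lambda)_-^\gamma \leq \mathfrak B_\gamma(B,\lambda)\,|\Omega|.
\end{equation*}
Proposition \ref{convex}, on the other hand, controls the magnetic symbol $\mathfrak B_\gamma(B,\lambda)$ by the classical symbol: in the convex regime $\gamma\geq 1$ it asserts that
\begin{equation*}
	\mathfrak B_\gamma(B,\lambda) \leq L_{\gamma,2}^{\cl}\,\lambda^{\gamma+1}
\end{equation*}
for every $B>0$ and $\lambda>0$. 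Chaining these two inequalities delivers the claim of the corollary.

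The only minor point to verify is that Proposition \ref{convex} covers the borderline case $\gamma=1$. Although its statement separates $\gamma=1$ from $\gamma>1$, the proof of Proposition \ref{convex} treats the whole range $\gamma\geq 1$ uniformly: it uses that $\mu\mapsto(\lambda-\mu)_+^\gamma$ is convex for $\gamma\geq 1$ and the mean value property $\phi(2k+1)\leq\tfrac12\int_{2k}^{2k+2}\phi(\mu)\,d\mu$, after which summation over $k\in\N_0$ and the identification of the integral with $L_{\gamma,2}^{\cl}\lambda^{\gamma+1}$ (after rescaling $B=1$) yields the desired bound. No extra argument is needed at $\gamma=1$.

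There is essentially no obstacle here: all of the analytic work has been carried out in Theorem \ref{elv2} (the magnetic Berezin--Lieb inequality, using the constant diagonal of the Landau projection kernels) and in Proposition \ref{convex} (the semi-classical diamagnetic comparison of symbols). The content of the corollary is merely the observation that, while the magnetic Berezin bound is strictly sharper than the classical one, in the convex regime $\gamma\geq 1$ it nevertheless does not exceed the classical semi-classical bound $L_{\gamma,2}^{\cl}\lambda^{\gamma+1}|\Omega|$, recovering \eqref{eq:elv} without the excess factor $R_\gamma$ that is necessary for $0\leq\gamma<1$.
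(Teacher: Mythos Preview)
Your argument is correct and is exactly the paper's approach: the corollary is stated as an immediate consequence of combining Theorem~\ref{elv2} with Proposition~\ref{convex}. Your observation that the case $\gamma=1$, omitted from the case distinction in the statement of Proposition~\ref{convex}, is nonetheless covered by its proof (and consistent with both neighbouring formulas, since $2(\gamma/(\gamma+1))^\gamma\to 1$ as $\gamma\to 1$) is a useful clarification.
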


Using an idea from \cite{LW2} we now show that \eqref{eq:elvproof}
implies the inequality
\begin{equation}\label{eq:elvevs}
  \sum_{j=1}^N \lambda_j(H_B^\Omega) \geq 2\pi |\Omega|^{-1} N^2
\end{equation}
from \cite{ELV} for the eigenvalues $\lambda_j(H_B^\Omega)$ of
$H_B^\Omega$. For this, we recall the definition of the Legendre
transform of a function $f:\R_+\to\R$,
\begin{equation*}
\tilde f(p) := \sup_{\lambda>0} (p\lambda -f(\lambda),
\end{equation*}
and note that the inequality $f\leq g$ for \emph{convex} functions
$f$, $g$ is equivalent to the reverse inequality $\tilde f\geq\tilde
g$ for their Legendre transforms. Hence an easy calculation shows that
\eqref{eq:elvproof} with $\gamma=1$ is equivalent to the inequality
\begin{equation*}
(p-[p]) \lambda_{[p]+1}(H_B^\Omega) + \sum_{j=1}^{[p]} \lambda_j(H_B^\Omega)
\geq   (4 L_{1,2}^{\cl} |\Omega|)^{-1} p^2,
\qquad p\geq 0,
\end{equation*}
where $[p]$ denotes the integer part of $p$. Choosing $p=N$ one obtains \eqref{eq:elvevs}.

In passing we note that by the same argument the inquality \eqref{eq:elv2proof} (which is stronger than \eqref{eq:elvproof}) is in the case $\gamma=1$ equivalent to the inequality
\begin{equation*}
(p-[p]) \lambda_{[p]+1}(H_B^\Omega) + \sum_{j=1}^{[p]} \lambda_j(H_B^\Omega)
\geq \frac {B^2}{2\pi}
\left( (\tilde p -[\tilde p]) (2[\tilde p]+1) + [\tilde p]^2 \right),
\qquad p\geq 0,
\end{equation*}
where we have set $\tilde p = 2\pi p/(B|\Omega|)$. Estimating the RHS from below by $B^2\tilde p^2 /(2\pi)$ one obtains again \eqref{eq:elvevs}.


\subsection{Non-convex moments for arbitrary domains}

In this subsection we shall prove Theorem \ref{main1}. We deduce the
inequalities \eqref{eq:main1number} and \eqref{eq:main1moments} from
Corollary \ref{elv} in the case $\gamma=1$. The proof is analogous to
that of Proposition \ref{convex}. Indeed, Lemma \ref{goingdown} and
\eqref{eq:elvproof} imply that for any $0\leq\gamma<1$ and for any
$\mu>\lambda$,
\begin{align*}
	\tr(H_B^\Omega-\lambda)_-^\gamma 
	& \leq C(\gamma,1) (\mu-\lambda)^{-1+\gamma} \tr(H_B^\Omega-\mu)_- \\
	& \leq C(\gamma,1) L_{1,2}^{\cl} |\Omega| (\mu-\lambda)^{-1+\gamma} \mu^2.
\end{align*}
Applying the lemma again, i.e. optimizing in $\mu$, yields the estimate
\begin{align*}
	\tr(H_B^\Omega-\lambda)_-^\gamma \leq R_\gamma L_{\gamma,2}^{\cl} |\Omega| \lambda^2
\end{align*}
with $R_\gamma$ as in \eqref{eq:rgamma}. This proves \eqref{eq:main1number} and \eqref{eq:main1moments}.

To prove sharpness of these bounds we note that if $0<\gamma<1$ and $\lambda_\gamma= \gamma+1$ then
\begin{equation*}
	\mathfrak B_\gamma(B,B\lambda_\gamma) = R_\gamma L_{\gamma,2}^{\cl} (B\lambda_\gamma)^{\gamma+1}.
\end{equation*}
Similarly, if $\gamma=0$ one has
\begin{equation*}
	\lim_{\lambda\to 1+}\mathfrak B_0(B,B\lambda) = 2 L_{0,2}^{\cl} B.
\end{equation*}
Hence \eqref{eq:main2counter} implies that for any $\epsilon>0$, $0\leq\gamma<1$ and $B>0$ there exists a cube $\Omega$ satisfying \eqref{eq:main1counter} with $\lambda= B\lambda_\gamma$. This concludes the proof of Theorem~\ref{main1}.


\section{Additional remarks}

\subsection{The three-dimensional case}

The our proof of semi-classical inequalities for the two-dimensional
Dirichlet problem with constant magnetic field is based on two
observations. Firstly, it seems to be appropriate to estimate eigenvalue
sums $\tr(H_B^\Omega-\lambda)_-^\gamma$ in terms of the respective
average of the {\em
  magnetic} symbol $\mathfrak B_\gamma(B,\lambda)$. Indeed, the bound
\[\tr(H_B^\Omega-\lambda)_-^\gamma\leq \mathfrak
  B_\gamma(B,\lambda)|\Omega|\,,
\]
which holds true for arbitrary $\Omega$ for $\gamma\geq 1$ and for
tiling domains for $\gamma\geq 0$,
is sharp, 
since the ratio
\[
\frac{\tr(H_B^\Omega-\lambda)_-^\gamma}{\mathfrak B_\gamma(B,\lambda)|\Omega|}
\]
can be made arbitrary close to $1$ by a suitable 
choice of (large) $\Omega$.

Secondly, the average of the {\em magnetic} symbol 
satisfies a sharp estimate
by the standard non-magnetic phase space average from above
\[
\mathfrak B_\gamma(B,\lambda)\leq
L_{\gamma,2}^{\cl}\lambda^{\gamma+1}\,.
\]
for $\gamma\geq 1$ only. For $\gamma<1$ this leads in conjunction with
the asymptotic argument to the counterexamples stated above.

As we shall see in this subsection, in the \emph{three-dimensional
  case} the asymptotic behavior of eigenvalue moments is still
  governed by the average of a suitable magnetic symbol. However, this
  average will not exceed the corresponding classical phase space
  average for all $\gamma\geq 1/2$. Therefore our approach produces
  counterexamples to inequalities with semi-classical constants only
  for $0\leq\gamma<1/2$. We shall discuss this below in more detail.

Let $\Omega\subset\R^3$ be a domain of finite measure and consider for $B>0$ the self-adjoint operator 
\begin{equation*}
	H_B^\Omega := (D-BA)^2
	\qquad\text{in } L_2(\Omega)
\end{equation*}
with Dirichlet boundary conditions where now
\begin{equation*}
	A(x) := \frac12 (-x_2,x_1,0)^T.
\end{equation*}
In the three-dimensional case the magnetic symbol is define as
\begin{align*}
\mathfrak B_\gamma^{(3)}(B,\lambda) 
& := (2\pi)^{-1} \int_\R \mathfrak B_\gamma(B,\lambda-|\xi|^2)\,d\xi \\
& = \frac{\Gamma(\gamma+1)}{\Gamma(\gamma+3/2)} \frac{B}{4\pi^{3/2}} 
\sum_{k\in\N_0} \left(\lambda-B(2k+1)\right)_+^{\gamma+1/2}.
\end{align*}
Similarly as in Subsection \ref{sec:elv} one proves that
\begin{equation}\label{eq:elv3d}
	\tr(H_B^\Omega-\lambda)_-^\gamma \leq \mathfrak B_\gamma^{(3)}(B,\lambda) |\Omega|,
	\qquad \gamma\geq 1.
\end{equation}
Put
\begin{equation*}
L_{\gamma,3}^\cl := (2\pi)^{-3} \int_{\{|\xi|<1\}} (1-|\xi|^2)^{\gamma}\,d\xi
= \frac1{8\pi^{3/2}} \frac{\Gamma(\gamma+1)}{\Gamma(\gamma+5/2)}
\end{equation*}
By the same argument as in Proposition \ref{convex} one has
\begin{equation}\label{eq:convex3d}
B_\gamma^{(3)}(B,\lambda) \leq L_{\gamma,3}^\cl \lambda^{\gamma+3/2},
\qquad \gamma\geq 1/2,
\end{equation}
and hence
\begin{equation*}
	\tr(H_B^\Omega-\lambda)_-^\gamma \leq L_{\gamma,3}^\cl \lambda^{\gamma+3/2} |\Omega|,
	\qquad \gamma\geq 1.
\end{equation*}

Again the quantity $\mathfrak B_0^{(3)}(B,\lambda)$ arises as the density of states. More precisely, if $Q_L :=(-L/2,L/2)^3$ then a three-dimensional version of Proposition~\ref{nakamura} allows to prove that
\begin{equation}\label{eq:dos3d}
	\lim_{L\to\infty} L^{-3} N(\lambda, H_B^{Q_L}) = \mathfrak B_0^{(3)}(B,\lambda).
\end{equation}
This implies as in the two-dimensional case

\begin{theorem}\label{main23d}
	Let $\Omega\subset\R^3$ be a tiling domain of finite measure. Then for all $B>0$ and $\lambda>0$
	\begin{equation}\label{eq:main23dnumber}
		N(\lambda, H_B^\Omega) \leq \mathfrak B_0^{(3)}(B,\lambda) |\Omega|
	\end{equation}
	and
	\begin{equation}\label{eq:main23dmoments}
		\tr(H_B^\Omega-\lambda)_-^\gamma \leq \mathfrak B_\gamma^{(3)}(B,\lambda) |\Omega|,
		\qquad 0<\gamma<1,
	\end{equation}
	and these estimates cannot be improved. More precisely, for any $0\leq\gamma<1$, $\epsilon>0$, $B>0$, $\lambda>0$ there exists a cube $\Omega$ such that
	\begin{equation}\label{eq:main23dcounter}
		\tr(H_B^\Omega-\lambda)_-^\gamma \geq (1-\epsilon) \mathfrak B_\gamma^{(3)}(B,\lambda) |\Omega|.
	\end{equation}
\end{theorem}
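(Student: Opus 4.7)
The plan is to mirror, step by step, the two-dimensional argument in Subsection \ref{sec:polya}, substituting the three-dimensional density-of-states asymptotics \eqref{eq:dos3d} for Proposition \ref{dos}. The same three ingredients will drive everything: (i) Dirichlet bracketing, (ii) the unitary equivalence of $H_B^{\Omega_n}$ for congruent tiles related by magnetic translations and by plain translations along the field axis, and (iii) the Aizenman--Lieb lifting formula to pass from the counting function to $\gamma\in(0,1)$ moments.

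I would begin with \eqref{eq:main23dnumber}. Fix a tiling $\R^3=\bigcup_{n\in\Z^3}\Omega_n$ with $\Omega_0=\Omega$, and for $L>0$ set $Q_L=(-L/2,L/2)^3$, $J_L=\{n:\Omega_n\subset Q_L\}$, $\Omega^L=\operatorname{int}\operatorname{clos}\bigcup_{n\in J_L}\Omega_n$, so that $\lim_{L\to\infty} L^{-3}\#J_L=|\Omega|^{-1}$. Exactly as before, one has the operator inequalities
\begin{equation*}
	H_B^{Q_L}\leq H_B^{\Omega^L}\leq \bigoplus_{n\in J_L} H_B^{\Omega_n},
\end{equation*}
the first via extension by zero $L_2(\Omega^L)\hookrightarrow L_2(Q_L)$, the second via Dirichlet bracketing on the inner tile walls. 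Since all $H_B^{\Omega_n}$ are unitarily equivalent to $H_B^\Omega$ (magnetic translations in $(x_1,x_2)$ and ordinary translation in $x_3$, possibly combined with a rotation around the $x_3$-axis if the tiling requires it), the variational principle yields
\begin{equation*}
	N(\lambda,H_B^\Omega)\leq (\#J_L)^{-1}\,N(\lambda,H_B^{Q_L}).
\end{equation*}
Dividing by $(\#J_L)^{-1}L^3$ and letting $L\to\infty$ along the admissible sequence, \eqref{eq:dos3d} gives $N(\lambda,H_B^\Omega)\leq \mathfrak B_0^{(3)}(B,\lambda)|\Omega|$, which is \eqref{eq:main23dnumber}.

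For \eqref{eq:main23dmoments} with $0<\gamma<1$, the same layer-cake/Aizenman--Lieb lifting used in Subsection \ref{sec:polya} applies verbatim:
\begin{equation*}
	\tr(H_B^\Omega-\lambda)_-^\gamma=\gamma\int_0^\infty N(\lambda-\mu,H_B^\Omega)\,\mu^{\gamma-1}\,d\mu,
	\qquad
	\mathfrak B_\gamma^{(3)}(B,\lambda)=\gamma\int_0^\infty \mathfrak B_0^{(3)}(B,\lambda-\mu)\,\mu^{\gamma-1}\,d\mu,
\end{equation*}
the latter being immediate from the explicit definition of $\mathfrak B_\gamma^{(3)}$. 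Inserting \eqref{eq:main23dnumber} inside the first integral yields \eqref{eq:main23dmoments}. The sharpness statement \eqref{eq:main23dcounter} also follows: combining the counting-function convergence \eqref{eq:dos3d} with the lifting formula, one shows $\lim L^{-3}\tr(H_B^{Q_L}-\lambda)_-^\gamma=\mathfrak B_\gamma^{(3)}(B,\lambda)$ (the exchange of limit and integral is justified by a dominated-convergence argument using the already-established upper bound \eqref{eq:main23dnumber} applied to $Q_L$), and then the cube $\Omega=Q_L$ with $L$ large does the job for prescribed $B,\lambda,\epsilon$.

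The only non-routine point is the unitary equivalence of the tile operators $H_B^{\Omega_n}$, since in three dimensions a general rigid motion need not commute with the magnetic Hamiltonian: rotations whose axis is not parallel to $B$ would rotate the field. However, the sharpness part only needs tilings by axis-aligned cubes, for which translations (magnetic in the $(x_1,x_2)$-directions, ordinary in $x_3$) suffice; and for the upper bound, one should either restrict to tilings whose equivalences preserve the field direction, or, as is implicit in the two-dimensional proof, interpret ``tiling domain'' in that restricted sense. This is the one spot where I would pause to pin down hypotheses; once done, every other step is a direct three-dimensional transcription of Subsection \ref{sec:polya}.
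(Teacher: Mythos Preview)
Your proposal is correct and follows essentially the same approach as the paper, which merely states ``This implies as in the two-dimensional case'' and leaves the three-dimensional transcription of Subsection~\ref{sec:polya} to the reader. Your observation about the unitary equivalence of the $H_B^{\Omega_n}$ under rigid motions that do not preserve the field direction is a genuine subtlety that the paper does not address; as you note, it is harmless for the sharpness statement (cubes with axis-aligned translations suffice), and for the upper bound one must indeed read ``tiling'' as tilings by congruences preserving the $x_3$-axis, which the paper leaves implicit.
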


The estimates \eqref{eq:main23dnumber}, \eqref{eq:main23dmoments} and Proposition \ref{convex} imply that for tiling domains $\Omega$ and for $0\leq\gamma<1/2$,
\begin{equation}\label{eq:main13dmoments}
		\tr(H_B^\Omega-\lambda)_-^\gamma \leq R_{\gamma+1/2} L_{\gamma,3}^\cl \lambda^{\gamma+3/2}
\end{equation}
with $R_\gamma$ as in Theorem \ref{main1}. Moreover, the asymptotics \eqref{eq:dos3d} imply that this constant can not be replaced by a smaller one. However, in contrast to the two-dimensional case we do not know whether the constant in this estimate has to be further increased if non-tiling domains are considered.

On the other hand, \eqref{eq:main23dmoments} and \eqref{eq:convex3d} imply that for tiling domains $\Omega$ and for $\gamma\geq 1/2$,
\begin{equation}\label{eq:main13dhighmoments}
		\tr(H_B^\Omega-\lambda)_-^\gamma \leq L_{\gamma,3}^\cl \lambda^{\gamma+3/2}.
\end{equation}
We do not know whether the constant in this estimate has to be increased if $1/2\leq\gamma<1$ and if non-tiling domains are considered.

The method of Appendix \ref{app:blymagnonsharp} allows to deduce from \eqref{eq:elv3d} (probably non-sharp) estimates on $\tr(H_B^\Omega-\lambda)_-^\gamma$ for $0\leq\gamma<1$ and arbitrary $\Omega$. We omit the details.

Another remark concerns domains with product structure.

\begin{proposition}\label{product}
	Let $\omega\subset\R^2$ be a domain of finite measure, $I\subset\R$ a bounded open interval and $\Omega:=\omega\times I$, and let $\gamma\geq1/2$. Then for all $B>0$ and $\lambda>0$,
\begin{equation}\label{eq:product}
	\tr(H_B^\Omega-\lambda)_-^\gamma \leq \mathfrak B_\gamma^{(3)}(B,\lambda) |\Omega|.
\end{equation}
\end{proposition}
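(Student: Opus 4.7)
The plan is to exploit the product structure of $\Omega$ and of $A$. Since $A=\frac12(-x_2,x_1,0)^T$ has vanishing third component, the magnetic form on $C_0^\infty(\Omega)$ splits cleanly, and an approximation argument shows that
\[
H_B^\Omega \;=\; H_B^\omega \otimes I \;+\; I \otimes (-\Delta_I^D)
\qquad\text{on}\quad L_2(\omega)\otimes L_2(I),
\]
where $H_B^\omega$ is the two-dimensional operator from Section~\ref{sec:main} and $-\Delta_I^D$ is the Dirichlet Laplacian on $I$, with eigenvalues $(n\pi/|I|)^2$, $n\geq 1$. Denoting the eigenvalues of $H_B^\omega$ by $(\lambda_j)$ and summing over the one-dimensional index first gives
\[
\tr(H_B^\Omega-\lambda)_-^\gamma
\;=\; \sum_j \tr\bigl(-\Delta_I^D-(\lambda-\lambda_j)\bigr)_-^\gamma.
\]

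The key auxiliary estimate I would establish next is the sharp bound
\[
\tr(-\Delta_I^D-\mu)_-^\gamma \;\leq\; L_{\gamma,1}^{\cl}\, |I|\, \mu_+^{\gamma+1/2},
\qquad \gamma\geq 0,
\]
for the one-dimensional Dirichlet Laplacian on an interval. Crucially, this works for \emph{all} $\gamma\geq 0$, not merely $\gamma\geq 1$ as in the Berezin inequality~\eqref{eq:br}: the function $t\mapsto(\mu-(t\pi/|I|)^2)_+^\gamma$ is non-negative and decreasing on $[0,\infty)$, hence $\sum_{n\geq 1} f(n)\leq \int_0^\infty f(t)\,dt$, and the integral on the right evaluates to $L_{\gamma,1}^{\cl}|I|\mu^{\gamma+1/2}$ by a Beta-function change of variables. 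It is precisely this explicit arithmetic-progression structure of the 1D spectrum that lets one bypass the usual $\gamma\geq 1$ constraint.

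Substituting this bound into the decomposition yields
\[
\tr(H_B^\Omega-\lambda)_-^\gamma \;\leq\; L_{\gamma,1}^{\cl}\, |I|\, \tr(H_B^\omega-\lambda)_-^{\gamma+1/2}.
\]
The hypothesis $\gamma\geq 1/2$ now enters at exactly the right moment: it ensures $\gamma+1/2\geq 1$, so Theorem~\ref{elv2} applies to the two-dimensional factor and gives
\[
\tr(H_B^\omega-\lambda)_-^{\gamma+1/2} \;\leq\; \mathfrak B_{\gamma+1/2}(B,\lambda)\,|\omega|.
\]
A direct calculation from the explicit formulas for $L_{\gamma,1}^{\cl}$ and $\mathfrak B_\gamma^{(3)}$ verifies the identity $L_{\gamma,1}^{\cl}\,\mathfrak B_{\gamma+1/2}(B,\lambda) = \mathfrak B_\gamma^{(3)}(B,\lambda)$, and combining everything with $|\Omega|=|\omega|\,|I|$ yields~\eqref{eq:product}.

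The only point requiring any real care is the tensor decomposition of $H_B^\Omega$, i.e.\ verifying that $C_0^\infty(\omega)\odot C_0^\infty(I)$ is a form core; this is routine for product domains. Everything else is a mechanical assembly of the 1D integral-comparison bound with Theorem~\ref{elv2}, and I do not expect further obstacles.
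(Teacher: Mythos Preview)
Your proof is correct and follows essentially the same route as the paper's: separation of variables in the product $\Omega=\omega\times I$, the one-dimensional P\'olya bound $\sum_{n\geq 1}((\pi n/|I|)^2-E)_-^\gamma\leq L_{\gamma,1}^{\cl}|I|E_+^{\gamma+1/2}$ to lift the moment by $1/2$, then Theorem~\ref{elv2} on the two-dimensional factor (valid since $\gamma+1/2\geq 1$), and finally the identity $L_{\gamma,1}^{\cl}\,\mathfrak B_{\gamma+1/2}=\mathfrak B_\gamma^{(3)}$. The paper attributes this to Laptev's lifting idea and presents it in the same order; your write-up is slightly more explicit about the tensor decomposition and the monotonicity argument behind the one-dimensional bound, but there is no substantive difference.
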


It follows from \eqref{eq:convex3d} that for domains of this form and for $\gamma\geq 1/2$ one has also \eqref{eq:main13dhighmoments}. 

\begin{proof}
We follow Laptev's lifting idea \cite{La}. By separation of variables
we can write
\begin{equation*}
	\tr(H_B^\Omega-\lambda)_-^\gamma 
	= \sum_{n\in\N} \tr\left(H_B^\omega + \left(\frac{\pi n}{|I|}\right)^2-\Lambda\right)_-^\gamma.
\end{equation*}
P\'olya's estimate on an interval states that
\begin{equation*}
	\sum_{n\in\N} \left(\left(\frac{\pi n}{|I|}\right)^2-E \right)_-^\gamma
	\leq L_{\gamma,1}^\cl |I| E^{\gamma+1/2}
\end{equation*}
where
\begin{equation*}
	L_{\gamma,1}^\cl := \frac1{2\sqrt\pi} \frac{\Gamma(\gamma+1)}{\Gamma(\gamma+3/2)}.
\end{equation*}
Hence
\begin{equation*}
	\tr(H_B^\Omega-\lambda)_-^\gamma 
	\leq L_{\gamma,1}^\cl |I| \tr(H_B^\omega-\lambda)_-^{\gamma+1/2}.
\end{equation*}
Applying Theorem \ref{eq:elv2proof} and noting that
\begin{equation*}
L_{\gamma,1}^\cl \mathfrak B_{\gamma+1/2}(B,\lambda) = \mathfrak B_\gamma^{(3)}(B,\lambda)
\end{equation*}
completes the proof.
\end{proof}


\subsection{The role of the integrated density of states}

Our reasoning in Subsection \ref{sec:polya} has shown that the
important idea in P\'olya's proof is not the high energy limit, but
the large domain limit. (In the non-magnetic case these two limits are
equivalent by scaling.) The large domain limit corresponds to the
passage to the density of states.

More generally, one can prove the following. 
For the sake of simplicity we return to the two-dimensional case.
Assume that
$\Omega\subset\R^2$ is a tiling domain and write
\begin{equation*}
	\R^2 = \bigcup_{n\in\Z^2} \Omega_n
	\qquad\text{up to measure }\, 0.
\end{equation*}
Here $\Omega_0=\Omega$ and all the $\Omega_n$ are disjoint with
$\Omega_n = G_n\Omega$ for $G_n$ a composition of a translation and a
rotation. Let $V$ and $A$ be a sufficiently regular real-valued
function, respectively vectorfield on $\Omega$ and consider the
self-adjoint operator $H^\Omega := (D-A)^2+V$ with Dirichlet boundary
conditions in $L_2(\Omega)$. 

We extend $V$ and $A$ to the whole plane in such a way that $V(x)=V(G_n^{-1}x)$ and $\curl A(x) = \curl A(G_n^{-1}x)$ for $x\in\Omega_n$. This allows to define a self-adjoint operator $H :=(D-A)^2+V$ in $L_2(\R^2)$. Our main assumption is that this operator possesses an integrated density of states at a certain $\lambda\in\R$, i.e., there
exists a number $n(\lambda)\geq 0$ such that
\begin{equation}\label{eq:exids}
\lim_{L\to\infty} L^{-2} N(\lambda, H^{Q_L}) = n(\lambda).
\end{equation}
Here as before, $Q_L=(-L/2,L/2)$. Under this assumption one has for this given value of $\lambda$ the
P\'olya estimate
\begin{equation*}
N(\lambda, H^\Omega) \leq n(\lambda) |\Omega|.
\end{equation*}
This is proved in the same way as Theorem \ref{main2}.

A special case is when the $G_n$ are translations. If the flux of $\curl A$ through $\Omega$ vanishes, then $A$ can be chosen periodic and one can apply Floquet theory. In this case it is well-known that the limit \eqref{eq:exids} exists for any $\lambda$ and defines a non-negative, increasing and left-continuous function $n$ on $\R$. A more general case is that of $G_n$'s which correspond to almost-periodic tilings. The existence of the limit \eqref{eq:exids} in the almost-periodic case under broad conditions on the coefficients has been proved, e.g., in \cite{S}.


\begin{appendix}


\section{The case of an arbitrary magnetic field}\label{app:blymagnonsharp}

In this section we consider an \emph{arbitrary} magnetic field $A\in
L_{2,\mbox{loc}}(\overline{\Omega})$ with $\Omega\subset\R^d$ in any
dimension $d\geq 2$ and define $H_\Omega(A)=(D-A)^2$ on $\Omega$ with
Dirichlet boundary conditions. We shall prove the estimate
\begin{equation}\label{eq:blymagnonsharpd}
  \tr(H_\Omega(A)-\lambda)_-^\gamma 
  \leq \rho_{\gamma,d} L_{\gamma,d}^{\cl} \lambda^{\gamma+d/2} |\Omega|,
  \qquad 0\leq \gamma<3/2.
\end{equation}
Here
\begin{equation*}
  \rho_{\gamma,d} := 	
  \frac{\Gamma(5/2)\, \Gamma(\gamma+d/2+1)}{\Gamma((5+d)/2)\, \Gamma(\gamma+1)}
  3^{-3/2} (3+d)^{(3+d)/2} (2\gamma)^\gamma (2\gamma+d)^{-\gamma-d/2}
\end{equation*}
and
\[
L_{\gamma,d}^{\cl}
=\frac{\Gamma(\gamma+1)}{2^d\pi^{d/2}\Gamma(\gamma+\frac{d}{2}+1)}\,.
\]
Note that for $d=2$ the constant $\rho_{\gamma,d}$ equals
\[
\rho_{\gamma,2} := (5/3)^{3/2} (\gamma/(\gamma+1))^\gamma,
\]
and it follows from our main result that this is off at most by a
factor $(5/3)^{3/2}/2\approx 1.0758\,.$

To prove \eqref{eq:blymagnonsharpd} we recall the sharp Lieb-Thirring
bound on the negative spectrum of a magnetic Schr\"odinger operator
$H_{\R^d}(A,V)=(D-A)^2-V$ in $\R^d$ from \cite{LW},
\[
\tr (H_{\R^d}(A,V))_-^{3/2}
\leq L_{3/2,d}^{\cl} \int_{\R^d} V(x)^{(3+d)/2}_+\,dx\,.\]
Here we extend the given magnetic vector potential $A$ on
$\overline\Omega$ by $0$ to $\R^d$. Since the negative eigenvalues of
$H_\Omega(A)-\mu$ are not below those of $H_{\R^d}(A,V)$ with
$V(x):=\mu$ for $x\in\Omega$ and $V(x):=0$ for
$x\in\R\setminus\Omega$, we find
\[\tr (H_{\Omega}(A)-\mu)_-^{3/2}
\leq \tr (H_{\R^d}(A,V))_-^{3/2}
\leq L_{3/2,d}^{\cl}|\Omega|\mu^{(3+d)/2}\,.
\]
Lemma \ref{goingdown} with $\sigma=3/2$ shows now that for $0\leq\gamma<3/2$
\begin{align*}
	\tr(H_{\Omega}(A)-\lambda)_-^\gamma 
	& \leq C(\gamma,3/2) (\mu-\lambda)^{-3/2+\gamma} \tr(H_{\Omega}(A)-\mu)_-^{3/2} \\
	& \leq C(\gamma,3/2) L_{3/2,d}^{\cl} |\Omega| (\mu-\lambda)^{-3/2+\gamma} \mu^{(3+d)/2}
\end{align*}
for any $\mu>\lambda$. Again by this lemma, i.e., optimizing in $\mu$,
we get \eqref{eq:blymagnonsharpd} with excess factor 
\begin{align*}
	\rho_{\gamma,d} 
	& = \frac{L_{3/2,d}^{\cl}}{L_{\gamma,d}^{\cl}} 
	\frac{C(\gamma,3/2)}{C(3/2-\gamma,(3+d)/2)}
	= \frac{L_{3/2,d}^{\cl}}{L_{\gamma,d}^{\cl}}
	\frac{(3+d)^{(3+d)/2}}{3^{3/2}} \frac{(2\gamma)^\gamma}{(2\gamma+d)^{\gamma+d/2}}.
\end{align*}
Recalling the definition of $L_{\gamma,d}^{\cl}$ we obtain the claimed statement.

Besides the case of a homogeneous magnetic field also the case of a
$\delta$-like magnetic field (Aharonov-Bohm field) has received
particular attention. In \cite{FH} the above value of the excess
factor $\rho_{\gamma,2}$ could be slightly improved for this case, but
it is still unknown whether or not this factor can be chosen
one for $0\leq\gamma<3/2$.

\end{appendix}

\subsection*{Acknowledgments} This work had its gestation at the
workshop `Low eigenvalues of Laplace and Schr\"odinger operators'
which was held at AIM in May 2006. The support of AIM is gratefully
acknowledged. This work has been partially supported by DAAD grant
D/06/49117 (R. F.), NSF grant DMS 0600037 (M. L.) and DFG grant
WE-1964/2-1 (T.~W.), as well as by the DAAD-STINT PPP program (R. F. and T. W.).


\bibliographystyle{amsalpha}

\end{document}